\newtheorem{proposition}{Proposition}[section]
\newtheorem{theorem}{Theorem}[section]
\newtheorem{lemma}[theorem]{Lemma}
\def\bSig\mathbf{\Sigma}
\newcommand*\bigcdot{\mathpalette\bigcdot@{.5}}
\newcommand*\bigcdot@[2]{\mathbin{\vcenter{\hbox{\scalebox{#2}{$\m@th#1\bullet$}}}}}
\begin{document}

\begin{frontmatter}
%%%%%%%%%%%%%%%%%%%%%%%%%%%%%%%%%%%%%%%%%%%%%%
%%                                          %%
%% Enter the title of your article here     %%
%%                                          %%
%%%%%%%%%%%%%%%%%%%%%%%%%%%%%%%%%%%%%%%%%%%%%%
\title{Inference on the state process of periodically inhomogeneous hidden Markov models for animal behavior}
%\title{A sample article title with some additional note\thanksref{T1}}
\runtitle{Inference for periodically inhomogeneous hidden Markov models}
%\thankstext{T1}{A sample of additional note to the title.}

\begin{aug}
%%%%%%%%%%%%%%%%%%%%%%%%%%%%%%%%%%%%%%%%%%%%%%%
%% Only one address is permitted per author. %%
%% Only division, organization and e-mail is %%
%% included in the address.                  %%
%% Additional information can be included in %%
%% the Acknowledgments section if necessary. %%
%% ORCID can be inserted by command:         %%
%% \orcid{0000-0000-0000-0000}               %%
%%%%%%%%%%%%%%%%%%%%%%%%%%%%%%%%%%%%%%%%%%%%%%%
\author[A]{\fnms{Jan-Ole}~\snm{Koslik}\ead[label=e1]{jan-ole.koslik@uni-bielefeld.de}}, 
\author[A]{\fnms{Carlina C.}~\snm{Feldmann}}, %\ead[label=e2]{???@???}},
\author[A]{\fnms{Sina}~\snm{Mews}}, %\ead[label=e2]{???@???}},
\author[A]{\fnms{Rouven}~\snm{Michels}}, %\ead[label=e2]{???@???}},
\and
\author[A]{\fnms{Roland}~\snm{Langrock}}%\ead[label=e3]{???@???}}
%%%%%%%%%%%%%%%%%%%%%%%%%%%%%%%%%%%%%%%%%%%%%%
%% Addresses                                %%
%%%%%%%%%%%%%%%%%%%%%%%%%%%%%%%%%%%%%%%%%%%%%%
\address[A]{Department of Business Administration and Economics, Bielefeld University, 33615 Bielefeld, Germany\printead[presep={,\ }]{e1}}

%\address[B]{???\printead[presep={,\ }]{e2,e3}}
\end{aug}

\begin{abstract}
\noindent
Over the last decade, hidden Markov models (HMMs) have become increasingly popular in statistical ecology, where they constitute natural tools for studying animal behavior based on complex sensor data.
Corresponding analyses sometimes explicitly focus on --- and in any case need to take into account --- periodic variation, for example by quantifying the activity distribution over the daily cycle or seasonal variation such as migratory behavior.
For HMMs including periodic components, we establish important mathematical properties that allow for comprehensive statistical inference related to periodic variation, thereby also providing guidance for model building and model checking.
Specifically, we derive the periodically varying unconditional state distribution as well as the time-varying and overall state dwell-time distributions 
--- all of which are of key interest when the inferential focus lies on the dynamics of the state process.
Crucially, we demonstrate that the dwell-time distributions of periodically inhomogeneous HMMs can deviate substantially from a geometric distribution, thus compensating biologically unrealistic consequences of the Markov property.
We use the novel inference and model-checking tools to investigate changes in the diel activity patterns of fruit flies in response to changing light conditions.
\end{abstract}

\begin{keyword}
\kwd{Markov chain}
\kwd{seasonality}
\kwd{sojourn time}
\kwd{stationary distribution}
\kwd{statistical ecology}
\kwd{time series}
\end{keyword}

\end{frontmatter}
%%%%%%%%%%%%%%%%%%%%%%%%%%%%%%%%%%%%%%%%%%%%%%
%% Please use \tableofcontents for articles %%
%% with 50 pages and more                   %%
%%%%%%%%%%%%%%%%%%%%%%%%%%%%%%%%%%%%%%%%%%%%%%
%\tableofcontents

%%%%%%%%%%%%%%%%%%%%%%%%%%%%%%%%%%%%%%%%%%%%%%
%%%% Main text entry area:

\section{Introduction}
\label{s:intro}

Over the last two decades, advances in biologging technology have revolutionized behavioral ecology \citep{hussey2015aquatic,kays2015terrestrial}. 
Complex sensor data, nowadays often collected at very high resolution (e.g.\ 1 Hz), allow ecologists to more precisely identify foraging and other movement maneuvers employed by animals, to reveal their interaction with conspecifics and prey, and ultimately to infer how they cope with environmental and anthropogenic change % (??) 
\citep{nathan2022big}. 
In particular, driven by these technological advancements, periodic variation in animal behavior 
can now be studied under natural conditions.
This allows for comprehensive inference focused, for example, on the identification of diurnal rhythms, the quantification of individual heterogeneity in diel variation, and the prediction of seasonal patterns or events such as migratory behavior \citep{hertel2017case,weegman2017using,jannetti2019day,beumer2020application}. % 1-2 more REFs here, which fit to the examples
Even if not of primary interest, ignoring such periodic variation can invalidate statistical inference: standard errors might be underestimated due to residual autocorrelation, and the model formulation as guided by information criteria may be more complex than necessary to compensate for the model misspecification \citep{li2017incorporating,pohle2017selecting}.

A statistical framework that naturally lends itself to inference on the dynamics of animal behavior in general, and periodic variation therein in particular, 
is given by the class of hidden Markov models (HMMs). 
In HMMs, the movement metrics observed --- for instance, the step lengths and turning angles between successive GPS fixes, or the dynamic body acceleration calculated from acceleration sensors --- are regarded as noisy measurements of the underlying, serially correlated behavioral process of the animal \citep{langrock2012flexible,leos2017analysis,mcclintock2020uncovering}. 
Ecological inference then mostly focuses on this unobserved behavioral state process, which is typically modeled as a finite-state Markov chain. % the latter,  
To account for periodic variation in animal behavior, the state-switching probabilities are commonly modeled as functions of time, for example by specifying trigonometric base functions with the desired wavelength as covariates using a multinomial logistic regression framework \citep{li2017incorporating,patterson2017statistical,feldmann2023}. 

For time-homogeneous Markov chains, implications of the estimated state-switching probabilities for behavioral time budgets can conveniently be characterized by (i) the stationary distribution, that is the unconditional distribution of the states, and (ii) the state dwell-time distributions. 
Regarding (i), when including periodic effects in the state process, inference on temporal variation in state occupancy is hampered by the fact that an inhomogeneous Markov chain does not have a stationary distribution.
Instead, an approximate version has commonly been reported in the literature \citep{farhadinia2020understanding,byrnes2021evaluating}, which in general is biased as it ignores the inhomogeneous evolution of the state process. 
Regarding (ii), as a consequence of the Markov property, the state dwell times in homogeneous HMMs are geometrically distributed, implying that the most likely duration of a stay within a state is one time unit, which is often biologically unrealistic.
Inhomogeneity in the state process, and in particular temporal variation, does however affect the dwell-time distributions --- yet to what extent this may alleviate the potentially undesirable characteristics of homogeneous Markov state processes has not been investigated.

In this contribution, we analytically derive the main properties of periodically inhomogeneous Markov state processes, namely (i) the time-varying unconditional distribution of the states and (ii) the state dwell-time distribution (both overall and at fixed times). 
Regarding (i), we demonstrate that the approximate state distribution frequently used as an important summary output in analyses of ecological systems can in fact be severely biased. 
Regarding (ii), we find that the state dwell-time distributions implied by HMMs with periodic components can deviate substantially from a geometric distribution. This highlights that temporal covariates may to some extent compensate for biologically unrealistic consequences of the Markov property. 
Our results do in fact apply to periodically inhomogeneous Markov chains in general --- i.e.\ not only to those that form the state process within an HMM --- but we focus on their role specifically within ecological applications of HMMs, as our research is motivated by the study of temporal niche mechanisms and, more generally, periodic variation in animal behavior, as inferred from noisy sensor data.  
For this type of application, we demonstrate the relevance of our theoretical results in an analysis of time-of-day variation in the behavior of common fruit flies ({\it Drosophila melanogaster}) in response to changing environmental conditions.
The methods derived to compute the summary statistics of interest are implemented in the \texttt{LaMa} \texttt{R} package (\citealt{lama}; see also \citealt{mews2025build}).

\section{Methods}
\label{sec:methods}

\subsection{Hidden Markov models --- definition and notation}

We consider an HMM comprising a state-dependent process $\{X_t\}_{t \in \mathbb{N}}$ (where $X_t$ can be a vector) and a latent state process $\{S_t\}_{t \in \mathbb{N}}$, with $S_t \in \{ 1,\ldots,N\}$ selecting which of $N$ possible component distributions generates $X_t$. 
The state process $\{S_t\}$ is assumed to be a Markov chain of first order, characterized by its initial state distribution and the time-varying transition probability matrix (t.p.m.)
%$$\boldsymbol{\Gamma}^{(t)}=(\gamma_{ij}^{(t)}), \; \text{ with } \; \gamma_{ij}^{(t)}=\\Pr(S_{t}=j | S_{t-i}=i),$$
$$\boldsymbol{\Gamma}^{(t)}=(\gamma_{ij}^{(t)}), \; \text{ with } \; \gamma_{ij}^{(t)}=\Pr(S_{t+1}=j | S_{t}=i), \quad t \in \mathbb{N}.$$
The observed variables $X_t$, $t \in \mathbb{N}$, are assumed to be conditionally independent of each other, given the states.

The methodological development of this contribution is generally applicable but was motivated by ecological applications, where $S_t$ could for example indicate the animal's behavioral state at time $t$  (e.g.\ resting, foraging, traveling), with $X_t$ some noisy measurement of that state (e.g.\ acceleration, movement speed, tortuosity of movement, as commonly recorded by biologgers). 
In these settings it is often necessary to incorporate periodic variation in the state-switching process, for example to account for diurnal rhythms. Associated models and their properties are discussed in the following two sections.  

\subsection{Time-varying state distribution in periodically inhomogeneous Markov state processes} % chains

\label{subsec:pstationary}

We consider a setting with periodically varying state-switching dynamics, such that
\begin{equation}
\label{eq1}
\boldsymbol{\Gamma}^{(t)} = \boldsymbol{\Gamma}^{(t+L)}
\end{equation} 
for $t \in \mathbb{N}$, with $L$ denoting the length of a cycle. 
For ease of notation, we restrict the index $t$ to $\{1, \dotsc, L\}$ corresponding to the $L$ unique matrices.
%As only $L$ unique t.p.m.\ exist, we will sometimes index by $t \in \{1, \dotsc, L\}$ to ease notation.% beginning in $t=1$.

For hourly data and $N=2$, we could for example model time-of-day variation ($L=24$) as 
\begin{equation}
\text{logit} \bigl( \gamma_{ij}^{(t)} \bigr) =  \beta_0^{(ij)} + \beta_1^{(ij)} \sin \Bigl( \frac{2\pi t}{24} \Bigr) + \beta_2^{(ij)} \cos \Bigl( \frac{2\pi t}{24} \Bigr), \; \text{ for } i\neq j. 
\label{eq:sincos} 
\end{equation} 

The interpretation of such transition probabilities as functions of time can be tedious, especially when $N>2$. 
Therefore, it has become common practice 
%--- especially in ecology ---
to instead consider a summary statistic, namely the periodically varying (unconditional) distribution of the states. 
For a given time $t$, this distribution is usually approximated by the hypothetical stationary distribution that would emerge if the process followed transition dynamics that would be implied when holding $\boldsymbol{\Gamma}^{(t)}$ constant over time, i.e.\ the solution to $\boldsymbol{\rho}^{(t)}=\boldsymbol{\rho}^{(t)}\boldsymbol{\Gamma}^{(t)}$ for each $t = 1, \ldots, L$,  subject to $\sum_{i=1}^N \rho_i^{(t)}=1$ \citep{patterson2009}. 
The approximation 
%of $\boldsymbol{\delta}^{(t)}$ 
will in general be biased because it ignores the preceding process dynamics as implied by $\boldsymbol{\Gamma}^{(t-1)},\boldsymbol{\Gamma}^{(t-2)},\ldots, \boldsymbol{\Gamma}^{(t-L)},$ and instead pretends that the process has been following the dynamics as implied by a constant $\boldsymbol{\Gamma}^{(t)}$ for a considerable time.  

However, for periodically inhomogeneous Markov chains as defined in Equation~\eqref{eq1}, there is in fact no need for such an approximation. To see this, consider for fixed $t$ the thinned Markov chain 
$\{S_{t+kL}\}_{k \in \mathbb{N}}$,
%$S_t,S_{t+L},S_{t+2L},\ldots$, 
which is homogeneous with constant t.p.m.\
$$
\tilde{\boldsymbol{\Gamma}}_t = \boldsymbol{\Gamma}^{(t)} \boldsymbol{\Gamma}^{(t+1)} \ldots \boldsymbol{\Gamma}^{(t+L-1)}.
$$
Visual intuition for the homogeneity of $\Tilde{\Gamma}_t$ is given in Figure~\ref{fig:perstat}.

Provided that this thinned Markov chain is irreducible, it has a unique stationary distribution $\boldsymbol{\delta}^{(t)}$, which is the solution to 
\begin{equation} 
\label{eq: deltat}
\boldsymbol{\delta}^{(t)} = \boldsymbol{\delta}^{(t)} \tilde{\boldsymbol{\Gamma}}_t
\end{equation}
(\citealp{Ge2006, Kargapolov2012, touron2019consistency}). 
For large $N$ or large $L$, it will typically be most convenient to calculate $\boldsymbol{\delta}^{(t)}$ recursively for $t=1,\ldots,L$ (see Supplementary Material (S.2)).
In both cases, the computational cost is relatively minor compared to a single likelihood evaluation.
If the initial state distribution of the actual state process is the stationary distribution of the thinned Markov chain active at the start of the time series, then each of the thinned Markov chains is stationary, such that $\boldsymbol{\delta}^{(t)}$ is the state distribution at any time $t=1,\ldots,L$ of interest. We refer to such a process $\{S_t\}_{t \in \mathbb{N}}$ as a \textit{periodically stationary} Markov chain. Otherwise, provided aperiodicity, the solution to Equation~\eqref{eq: deltat} will typically be a good approximation to the unconditional (marginal) distribution of states at time $t$, as each of the $L$ thinned Markov chains --- i.e.\ $\{S_{t+kL}\}_{k \in \mathbb{N}}$ for $t=1,\ldots, L$ --- converges to its respective stationary distribution $\bm{\delta}^{(t)}$. 
We therefore use the terms (periodically) stationary distribution and unconditional state distribution interchangeably, as these objects effectively coincide. 
Computing $\bm{\delta}^{(t)}$ can be done conveniently using the function \texttt{stationary\_p()} from the \texttt{R} package \texttt{LaMa}, which uses the efficient recursive approach outlined in Supplementary Material (S.2).

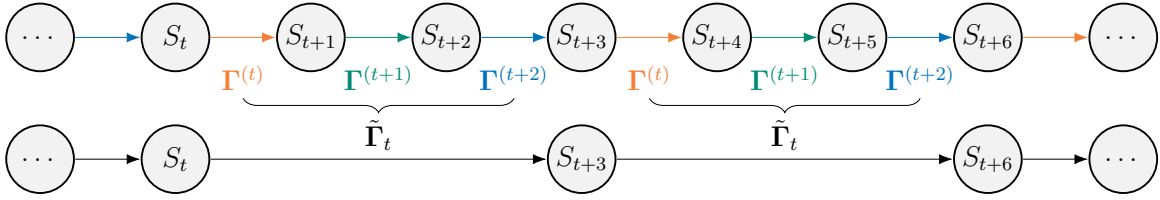
\begin{figure}
\centering
\scalebox{0.8}{
\begin{tikzpicture}
    \coordinate (A) at (0,0);
    \coordinate (B) at (2,0);
    \coordinate (C) at (4,0);
    \coordinate (D) at (6,0);
    \coordinate (E) at (8,0);
    \coordinate (F) at (10,0);
    \coordinate (G) at (12,0);
    \filldraw[fill=black!5, thick] ($(A)-(2,0)$) circle (0.5);
    \draw ($(A)-(2,0)$) node {$\dotsc$};
    \filldraw[fill=black!5, thick] (A) circle (0.5);
    \draw (A) node {$S_{t}$};
    \filldraw[fill=black!5, thick] (B) circle (0.5);
    \draw (B) node {$S_{t+1}$};
    \filldraw[fill=black!5, thick] (C) circle (0.5);
    \draw (C) node {$S_{t+2}$};
    \filldraw[fill=black!5, thick] (D) circle (0.5);
    \draw (D) node {$S_{t+3}$};
    \filldraw[fill=black!5, thick] (E) circle (0.5);
    \draw (E) node {$S_{t+4}$};
    \filldraw[fill=black!5, thick] (F) circle (0.5);
    \draw (F) node {$S_{t+5}$};
    \filldraw[fill=black!5, thick] (G) circle (0.5);
    \draw (G) node {$S_{t+6}$};
    \filldraw[fill=black!5, thick] ($(G)+(2,0)$) circle (0.5);
    \draw ($(G)+(2,0)$) node {$\dotsc$};
    \draw[-{Latex[length=2mm]}, color=RoyalBlue] ($(A)+(-1.48,0)$) -- ($(A)-(0.52,0)$);
    \draw[-{Latex[length=2mm]}, color=Orange] ($(A)+(0.52,0)$) -- ($(B)-(0.52,0)$);
    \draw[-{Latex[length=2mm]}, color=PineGreen] ($(B)+(0.52,0)$) -- ($(C)-(0.52,0)$);
    \draw[-{Latex[length=2mm]}, color=RoyalBlue] ($(C)+(0.52,0)$) -- ($(D)-(0.52,0)$);
    \draw[-{Latex[length=2mm]}, color=Orange] ($(D)+(0.52,0)$) -- ($(E)-(0.52,0)$);
    \draw[-{Latex[length=2mm]}, color=PineGreen] ($(E)+(0.52,0)$) -- ($(F)-(0.52,0)$);
    \draw[-{Latex[length=2mm]}, color=RoyalBlue] ($(F)+(0.52,0)$) -- ($(G)-(0.52,0)$);
    \draw[-{Latex[length=2mm]}, color=Orange] ($(G)+(0.52,0)$) -- ($(G)+(1.48,0)$);
    \draw[color=Orange] ($(A)+(1,-.6)$) node {$\bm{\Gamma}^{(t)}$};
    \draw[color=PineGreen] ($(B)+(1,-.6)$) node {$\bm{\Gamma}^{(t+1)}$};
    \draw[color=RoyalBlue] ($(C)+(1,-.6)$) node {$\bm{\Gamma}^{(t+2)}$};
    \draw[color=Orange] ($(D)+(1,-.6)$) node {$\bm{\Gamma}^{(t)}$};
    \draw[color=PineGreen] ($(E)+(1,-.6)$) node {$\bm{\Gamma}^{(t+1)}$};
    \draw[color=RoyalBlue] ($(F)+(1,-.6)$) node {$\bm{\Gamma}^{(t+2)}$};
    \draw[decorate, decoration={brace, amplitude = 6pt}] ($(C) + (1,-0.9)$) -- ($(A) + (1,-0.9)$) node[midway, below, yshift = -5pt]{$\Tilde{\bm{\Gamma}}_t$};
    \draw[decorate, decoration={brace, amplitude = 6pt}] ($(F) + (1,-0.9)$) -- ($(D) + (1,-0.9)$) node[midway, below, yshift = -5pt]{$\Tilde{\bm{\Gamma}}_t$};
    \filldraw[fill=black!5, thick] ($(A)-(2,1.8)$) circle (0.5);
    \draw ($(A)-(2,1.8)$) node {$\dotsc$};
    \filldraw[fill=black!5, thick] ($(A)-(0,1.8)$) circle (0.5);
    \draw ($(A)-(0,1.8)$) node {$S_{t}$};
    \filldraw[fill=black!5, thick] ($(D)-(0,1.8)$) circle (0.5);
    \draw ($(D)-(0,1.8)$) node {$S_{t+3}$};
    \filldraw[fill=black!5, thick] ($(G)-(0,1.8)$) circle (0.5);
    \draw ($(G)-(0,1.8)$) node {$S_{t+6}$};
    \filldraw[fill=black!5, thick] ($(G)-(-2,1.8)$) circle (0.5);
    \draw ($(G)-(-2,1.8)$) node {$\dotsc$};
    \draw[-{Latex[length=2mm]}] ($(A)+(-2,0)+(0.52,-1.8)$) -- ($(A)+(-0.52,-1.8)$);
    \draw[-{Latex[length=2mm]}] ($(A)+(0.52,-1.8)$) -- ($(D)+(-0.52,-1.8)$);
    \draw[-{Latex[length=2mm]}] ($(D)+(0.52,-1.8)$) -- ($(G)+(-0.52,-1.8)$);
    \draw[-{Latex[length=2mm]}] ($(G)+(0.52,-1.8)$) -- ($(G)+(2,0)+(-0.52,-1.8)$);
\end{tikzpicture}
}
\caption{Example visualization of \textit{periodic stationarity} with $L=3$. The thinned Markov chain $S_t, S_{t+3}, S_{t+6}, \dotsc$ has constant t.p.m.\ $\Tilde{\bm{\Gamma}}_t$.}
\label{fig:perstat}
\end{figure}

To demonstrate the potentially severe bias that arises from the use of the hypothetical stationary distribution $\bm{\rho}^{(t)}$, we simulated three periodically inhomogeneous 2-state Markov chains with different parameter sets, choosing a cycle length of $L = 24$. The parameter values are provided in the Supplementary Material (S.5). For each parameter set, we computed both the hypothetical stationary distribution as well as the periodically stationary distribution introduced above. 
To demonstrate that $\bm{\delta}^{(t)}$ actually gives the true unconditional state probabilities at each time of day, we simulated one very long realization of 1000 days (to reduce stochasticity) from each chain and computed the empirical state frequencies at each time of day. 
The results are shown in Figure \ref{fig:simulations_delta}. We find that in all three scenarios, the hypothetical stationary distribution $\bm{\rho}^{(t)}$ gives only a poor approximation of the true marginal state distribution. While there is only a minor shift in the first scenario, the other two scenarios are much more alarming. In the second scenario, the transition probabilities are such that the process is almost homogeneous. This is also confirmed by the true probability of occupying state 1, which is very close to $0.5$ throughout the entire cycle. However, surprisingly, the hypothetical stationary distribution oscillates between almost zero and almost one, indicating a very strong periodicity, which would lead to completely invalid conclusions in real applications. 
This severe bias is introduced because at certain times of the day, when both off-diagonal entries of the t.p.m.\ are very small, one transition probability is nonetheless substantially larger than the other one. Running the chain indefinitely, i.e., computing the limiting distribution $\bm{\rho}^{(t)}$, gradually shifts more and more probability mass to one of the states, yielding a very unbalanced stationary distribution that never actually arises from the true process.
Similarly, in the third scenario, the hypothetical stationary distribution shows a completely different temporal evolution compared to the true marginal state probabilities, which would also invalidate inference in practice.

\begin{figure}
    \centering
    \includegraphics[width=1\linewidth]{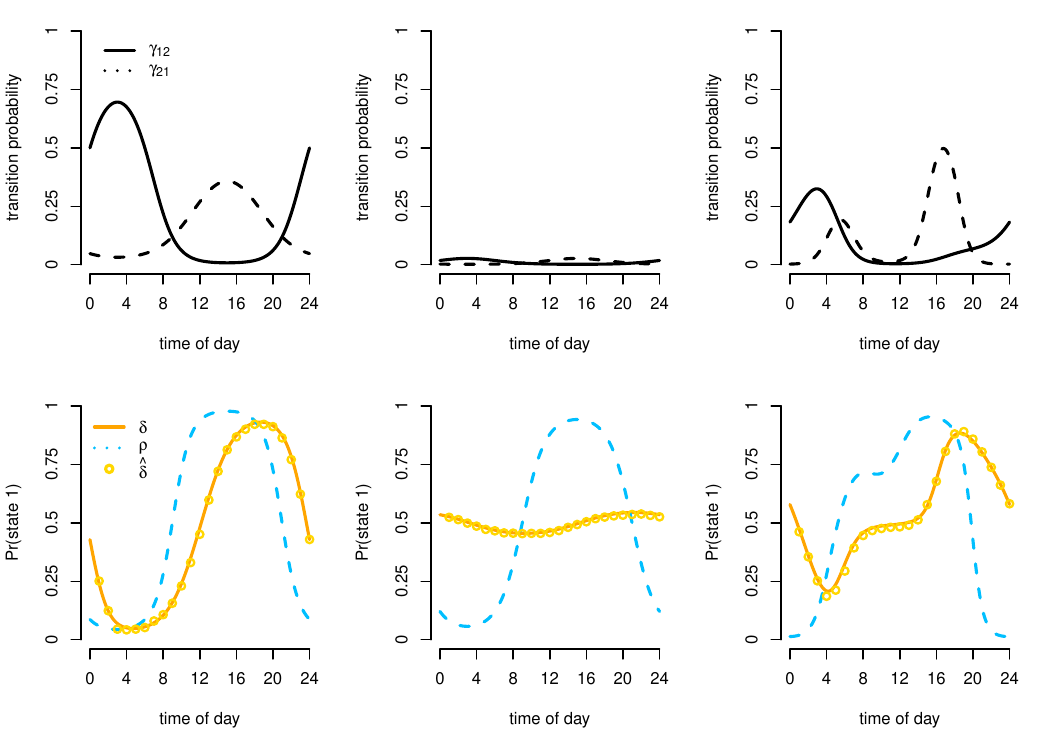}
    \caption{Off-diagonal transition probabilities (top row), implied periodically stationary distribution $\bm{\delta}^{(t)}$, hypothetical distribution $\bm{\rho}^{(t)}$, as well as the empirical marginal state distribution $\Hat{\bm{\delta}}^{(t)}$ (bottom row) for three different trigonometric relationships between the time of day and the transition probabilities (see Supplementary Material (S.5) for the precise model configurations). The empirical distribution is obtained by Monte Carlo simulation based on chains of 1000 days. }
    \label{fig:simulations_delta}
\end{figure}

\subsection{State dwell-time distribution of periodically inhomogeneous Markov chains}

In this section we derive the distribution of the state dwell times implied by a periodically stationary Markov chain. 
We first focus on the \textit{time-varying} state dwell-time distribution, i.e.\ the distribution of the duration of a stay in state $i$ beginning at time $t$, for each $t = 1,\ldots,L$ and each $i = 1, \ldots, N$.

\begin{proposition}
    \label{prop: dwell-time distribution t}
   Consider a periodically inhomogeneous Markov chain defined by $\bm{\Gamma}^{(t)}$, $t = 1,\dotsc,L$. 
   %\textcolor{orange}{Let $R_i^{(t)}$ denote the dwell time in state~$i$ beginning at time $t$. Then, its distribution is defined by the probability mass function}
   For this Markov chain, the probability mass function of the time-varying state dwell-time distribution of a stay in state $i$ beginning at time $t$ is
    \begin{equation}
        d_i^{(t)}(r) = \bigl(1-\gamma_{ii}^{(t+r-1)}\bigr) \prod_{k=1}^{r-1} \gamma_{ii}^{(t+k-1)}, \quad r \in \mathbb{N},
        \label{eq: dwell-time distribution t}
    \end{equation}
where $r$ denotes the length of the stay.
\end{proposition}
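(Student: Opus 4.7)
The plan is to translate the event ``the dwell in state $i$ beginning at time $t$ has length $r$'' into an explicit event on the underlying state path, and then apply the first-order Markov property to factor its conditional probability into a product of one-step transition probabilities. Specifically, a stay of length $r$ starting at time $t$ is, by definition, the event that $S_t = i$, $S_{t+s} = i$ for $s = 1, \ldots, r-1$, and $S_{t+r} \neq i$. Because $d_i^{(t)}(r)$ is defined conditional on the stay starting at time $t$, the condition $S_t = i$ is absorbed into the conditioning and only the $r$ subsequent one-step transitions need to be accounted for.

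Next, I would apply the Markov property (telescoping the conditional joint distribution) together with the definition $\gamma_{ii}^{(s)} = \Pr(S_{s+1}=i \mid S_s = i)$, which gives
\[
d_i^{(t)}(r) \;=\; \Pr(S_{t+r} \neq i \mid S_{t+r-1} = i) \prod_{k=1}^{r-1} \Pr(S_{t+k} = i \mid S_{t+k-1} = i) \;=\; \bigl(1 - \gamma_{ii}^{(t+r-1)}\bigr) \prod_{k=1}^{r-1} \gamma_{ii}^{(t+k-1)},
\]
which is exactly the claimed expression. The boundary case $r=1$ is handled by the usual convention that the empty product equals one, yielding $d_i^{(t)}(1) = 1 - \gamma_{ii}^{(t)}$, as expected for a single-step exit.

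I do not anticipate a genuine obstacle here, since the argument is a direct consequence of the first-order Markov property and uses no distributional assumption beyond the definition of $\bm{\Gamma}^{(t)}$; in particular, periodicity plays no role in the derivation itself and only enters later when one averages $d_i^{(t)}(r)$ against the distribution of entry times to obtain an \emph{overall} dwell-time distribution. The one item worth briefly justifying in the write-up is that the formula defines a proper probability mass function: a short telescoping calculation shows $\sum_{r=1}^{\infty} d_i^{(t)}(r) = 1 - \lim_{r \to \infty} \prod_{k=1}^{r-1} \gamma_{ii}^{(t+k-1)}$, and the limit vanishes whenever at least one diagonal entry is strictly less than one within each cycle (which is guaranteed, for instance, by irreducibility of the thinned chain $\tilde{\bm{\Gamma}}_t$ used earlier in the paper).
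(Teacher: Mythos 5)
Your proposal is correct and follows essentially the same route as the paper: write the dwell event as an explicit path event conditional on the stay beginning at time $t$, drop the condition $S_{t-1}\neq i$ via the Markov property, and telescope into one-step transition probabilities. The added remark that the formula sums to one is a sensible bonus not present in the paper's proof but does not change the argument.
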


For the proof, see Supplementary Material (S.1).
Equation~\eqref{eq: dwell-time distribution t} appears similar to the homogeneous case with geometric dwell-time distributions.
However, here the transition probabilities evolve over time, and as a consequence the probability mass function of this dwell-time distribution is not necessarily monotonically decreasing (cf.\ Figure~\ref{fig: timevarying flies}).
Even though the distribution is not geometric, it does nevertheless exhibit a periodic memorylessness property (see Supplementary Material (S.3)).

The time-varying state dwell-time distributions provide comprehensive information on the state dynamics within a cycle. However, visualization and interpretation of $L$ time-varying distributions for each state $i$ can be tedious (see Supplementary Material (S.4)). 
As an alternative, the means of these distributions may provide a useful summary statistic. 

\begin{proposition}
\label{prop: expected value}
In the setting of Proposition \ref{prop: dwell-time distribution t}, let $R_i^{(t)}$ denote the dwell time in state~$i$ for a stay beginning at time $t$. Then
\begin{equation*}
%    \label{eq: expectation}
    \mathbb{E} R_i^{(t)} = 
    % \sum_{r=1}^\infty r d_i^{(t)}(r) 
    \frac{L + \sum_{r=1}^L r d_i^{(t)}(r)}{\sum_{r=1}^L d_i^{(t)}(r)} - L.
\end{equation*}
\end{proposition}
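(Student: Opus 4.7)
The plan is to exploit periodicity of the dwell-time probabilities to sum the series $\mathbb{E} R_i^{(t)} = \sum_{r=1}^\infty r\, d_i^{(t)}(r)$ in closed form and then rearrange the result to match the stated expression. Writing any $r \in \mathbb{N}$ uniquely as $r = mL + s$ with integer $m \geq 0$ and $s \in \{1, \ldots, L\}$, and using the periodicity $\gamma_{ii}^{(u)} = \gamma_{ii}^{(u+L)}$, the formula in Equation~\eqref{eq: dwell-time distribution t} factorises as
\[
d_i^{(t)}(mL + s) = P_i^m \, d_i^{(t)}(s), \qquad \text{where } P_i = \prod_{k=1}^{L} \gamma_{ii}^{(k)},
\]
since any product of $L$ consecutive self-loop probabilities equals $P_i$ regardless of the starting index.

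Next I would establish the telescoping identity $\sum_{r=1}^L d_i^{(t)}(r) = 1 - P_i$. Probabilistically, the left-hand side is the probability that the chain, started in state $i$ at time $t$, exits $i$ within $L$ steps; one minus this is the probability of surviving all $L$ self-transitions, namely $P_i$. Setting $S_0 := \sum_{r=1}^L d_i^{(t)}(r) = 1 - P_i$ and $S_1 := \sum_{r=1}^L r\, d_i^{(t)}(r)$, substitution of the $(m,s)$-decomposition into the expectation and separation of the two sums yields
\[
\mathbb{E} R_i^{(t)} = \sum_{m=0}^\infty \sum_{s=1}^L (mL + s)\, P_i^m\, d_i^{(t)}(s) = L S_0 \sum_{m=0}^\infty m P_i^m + S_1 \sum_{m=0}^\infty P_i^m.
\]

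Under the standing assumption $P_i < 1$ (equivalent to $\mathbb{E} R_i^{(t)} < \infty$), the elementary identities $\sum_m P_i^m = (1-P_i)^{-1}$ and $\sum_m m P_i^m = P_i(1-P_i)^{-2}$ collapse the above to $\mathbb{E} R_i^{(t)} = (L P_i + S_1)/(1 - P_i)$. A final rewrite using $1 - P_i = S_0$ together with $L P_i = L - L S_0$ produces $(L + S_1)/S_0 - L$, which is the claimed formula.

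The main obstacle is identifying the telescoping identity $\sum_{r=1}^L d_i^{(t)}(r) = 1 - P_i$; this is the bridge that ties the infinite-series computation to the finite first-period sums appearing in the proposition's statement. Once this identity is in hand, the remainder is straightforward periodic bookkeeping together with standard geometric-series evaluation.
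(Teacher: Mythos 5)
Your proof is correct and follows essentially the same route as the paper's: decompose $r = mL+s$, factor out the full-period product $P_i^m$ by periodicity, invoke the identity $\sum_{r=1}^L d_i^{(t)}(r) = 1-P_i$, and finish with the geometric series and its derivative. The only cosmetic difference is that you justify the key identity probabilistically (survival of $L$ consecutive self-transitions), whereas the paper proves the same statement as a separate lemma by induction; both are valid and lead to the identical final rearrangement.
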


A proof of this proposition is provided in Supplementary Material (S.1).
Computing the expected dwell time for each state $i = 1, \dotsc, N$ and each time point $t = 1,\dotsc,L$, the variation in the mean dwell times over the course of a cycle can be concisely visualized  (see Figure \ref{fig: expected dwell time}). 

\begin{figure}[!htb]
    \centering
    \includegraphics[width=10cm]{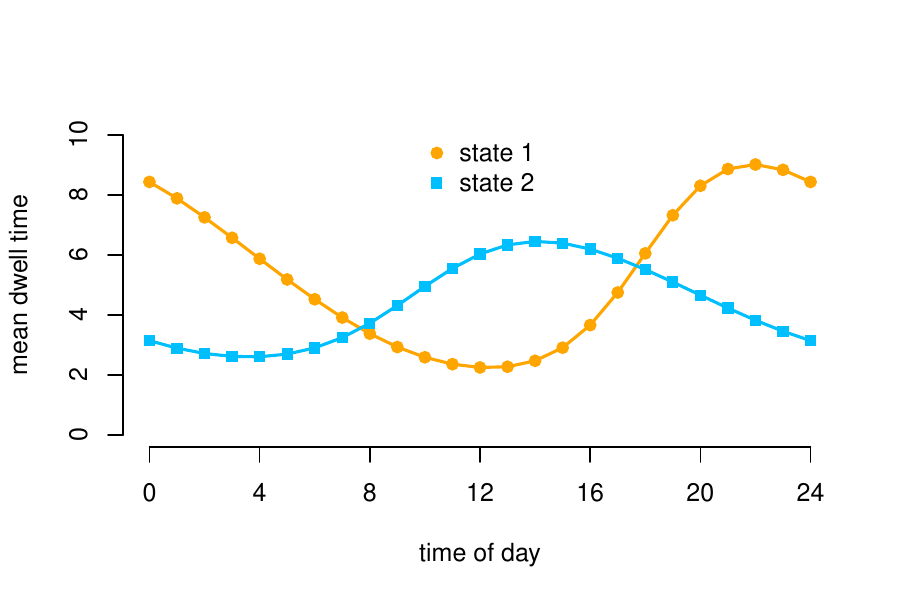}
    \caption{Means of the time-varying state dwell times of an example 2-state HMM with trigonometric modeling of periodic variation (see Supplementary Material (S.5) for the precise model configuration).} % an artificial
    \label{fig: expected dwell time}
\end{figure}

In practical applications, it may be cumbersome to interpret the properties of the time-varying dwell-time distributions, especially when the temporal resolution of the data is high (for example, with minute-by-minute data and diel variation, there would be $1440$ such distributions for each state). 
In addition, the focus of the inference with respect to state dynamics will often be on the \textit{overall} distribution of the durations in the different states, not explicitly conditioning on the start time of the stay. 
We obtain such a distribution as a mixture of the time-varying dwell-time distributions.

\begin{proposition}
\label{prop: dwell-time distribution}
For a periodically stationary Markov chain defined by $\bm{\Gamma}^{(t)}$, $t = 1,\dotsc,L$, the probability mass function of the overall (unconditional) dwell-time distribution in state $i$ is
\begin{equation}
    d_i(r) = \sum_{t=1}^L w_i^{(t)} d_i^{(t)}(r), \qquad r \in \mathbb{N},
    \label{eq: dwell-time distribution}
\end{equation}
with the mixture weights defined as
$$w_i^{(t)} = \frac{\sum_{l \in \mathcal{S} \setminus i} \delta_l^{(t-1)} \gamma_{li}^{(t-1)}}{\sum_{t=1}^L \sum_{l \in \mathcal{S} \setminus i} \delta_l^{(t-1)} \gamma_{li}^{(t-1)}}, \quad t = 1, \dotsc, L,$$
where $\mathcal{S}=\{1, \dotsc, N\}$, $\bm{\Gamma}^{(0)} = \bm{\Gamma}^{(L)}$, $\bm{\delta}^{(0)} = \bm{\delta}^{(L)}$ and $\bm{\delta}^{(t)}$ as in Equation~\eqref{eq: deltat}. Letting $R_i$ denote the \textit{overall} dwell time in state $i$, we further have that
\begin{equation*}
%    \label{eq: expectation}
    \mathbb{E} R_i = \sum_{t=1}^L w_i^{(t)} \mathbb{E} R_i^{(t)} = \sum_{t=1}^L \biggl( w_i^{(t)} \frac{L+\sum_{r=1}^L r d_i^{(t)}(r)}{\sum_{r=1}^L d_i^{(t)}(r)} \biggr) - L.
\end{equation*}
\end{proposition}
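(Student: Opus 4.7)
\textbf{Proof plan for Proposition \ref{prop: dwell-time distribution}.}
The plan is to express the overall dwell time as a mixture over the cycle position at which a stay begins, identify the mixture weights via the long-run frequency of stay-beginnings at each time $t \in \{1,\dotsc,L\}$, and then invoke Proposition \ref{prop: dwell-time distribution t} to recognize the component distributions as the $d_i^{(t)}$.

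First, I would observe that a stay in state $i$ begins at time $t$ precisely when $S_{t-1} \neq i$ and $S_t = i$. Under periodic stationarity, $\Pr(S_{t-1} = l) = \delta_l^{(t-1)}$ for each $l$, so the probability that a new stay in state $i$ begins at cycle position $t$ is
\begin{equation*}
p_i^{(t)} \;=\; \sum_{l \in \mathcal{S} \setminus \{i\}} \delta_l^{(t-1)} \gamma_{li}^{(t-1)},
\end{equation*}
using the conventions $\bm{\Gamma}^{(0)} = \bm{\Gamma}^{(L)}$, $\bm{\delta}^{(0)} = \bm{\delta}^{(L)}$. These quantities are the building blocks of the claimed weights $w_i^{(t)} = p_i^{(t)} / \sum_s p_i^{(s)}$.

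Second, I would argue that if one samples a stay uniformly from the stays in state $i$ occurring over a long time horizon, the probability that it began at cycle position $t$ equals $w_i^{(t)}$. This follows from a standard renewal/ergodic argument: over $M$ cycles, the expected number of stays in state $i$ beginning at position $t$ is $M p_i^{(t)}$, and dividing by the expected total number of such stays, $M \sum_s p_i^{(s)}$, yields $w_i^{(t)}$. Conditional on the stay beginning at time $t$, Proposition \ref{prop: dwell-time distribution t} gives its length distribution $d_i^{(t)}(\cdot)$. The law of total probability then yields
\begin{equation*}
d_i(r) \;=\; \sum_{t=1}^L w_i^{(t)} d_i^{(t)}(r),
\end{equation*}
which is \eqref{eq: dwell-time distribution}.

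Finally, the expected-value formula follows immediately by linearity: $\mathbb{E} R_i = \sum_{t=1}^L w_i^{(t)} \mathbb{E} R_i^{(t)}$, and substituting the expression for $\mathbb{E} R_i^{(t)}$ from Proposition \ref{prop: expected value} produces the claimed closed form (the constant $-L$ factors out of the mixture because $\sum_t w_i^{(t)} = 1$). The main obstacle, in my view, is making the mixture-weight identification rigorous: one must justify that ``conditioning on a stay'' corresponds to size-biasing by the frequencies $p_i^{(t)}$ rather than by, say, the occupancy probabilities $\delta_i^{(t)}$. A clean way to do this is via an ergodic-theorem argument on the renewal process of entry times into state $i$, using periodic stationarity to ensure existence of the relevant long-run averages within each residue class modulo $L$.
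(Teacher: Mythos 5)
Your proposal is correct and follows essentially the same route as the paper: both decompose the overall dwell time as a mixture over the cycle position at which a stay begins, with weights proportional to the entry probabilities $\sum_{l \in \mathcal{S}\setminus\{i\}} \delta_l^{(t-1)}\gamma_{li}^{(t-1)}$, and then obtain the mean by linearity using $\sum_t w_i^{(t)}=1$. The only difference is cosmetic: where you justify the weights via a long-run frequency (renewal/ergodic) count of stay-beginnings, the paper introduces an auxiliary uniform random variable $\tau$ on $\{1,\dotsc,L\}$ and applies the law of total probability, the factor $1/L$ cancelling in the normalization --- both devices yield the identical size-biasing you correctly flag as the crux.
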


The  proof of Proposition~\ref{prop: dwell-time distribution} is given in Supplementary Material (S.1).
It can further be shown that the periodic memorylessness property exhibited by the time-varying dwell-time distribution is inherited by the overall dwell-time distribution (see Supplementary Material (S.3)).

For homogeneous Markov chains, i.e.\ when $\gamma_{ij}^{(t)} = \gamma_{ij}$ for all $t$ and all $i,j \in \mathcal{S}$, the state dwell-time distribution given in Equation~\eqref{eq: dwell-time distribution} simplifies to the geometric case. To see this, we first note that
$$d_i^{(t)}(r) = \bigl(1 - \gamma_{ii}^{(t+r-1)} \bigr) \prod_{k=1}^{r-1} \gamma_{ii}^{(t+k-1)} = (1- \gamma_{ii})\gamma_{ii}^{r-1}$$
is constant in time. Thus,
$$d_i(r) = \sum_{t=1}^L w_i^{(t)} (1- \gamma_{ii})\gamma_{ii}^{r-1} = (1- \gamma_{ii})\gamma_{ii}^{r-1} \sum_{t=1}^L w_i^{(t)} = (1- \gamma_{ii})\gamma_{ii}^{r-1},$$
as the sum of the weights $w_i^{(t)}$ equals one. %\textcolor{orange}{Furthermore the overall dwell-time distribution inherits the memorylessness property from the time-varying dwell-time distribution (for the proof, see Appendix \ref{A2: Memorylessness}).}

If, however, the Markov chain is not homogeneous, then the distribution in Equation~\eqref{eq: dwell-time distribution} can deviate rather substantially from a geometric distribution, and may even be multimodal. 
To illustrate this, Figure~\ref{fig: dwell_distr_sim} displays an example overall state dwell-time distribution implied by an HMM with trigonometric modeling of the periodic variation in the state transition probabilities (see Supplementary Material (S.5) for the model parameters leading to this outcome). 
To verify our theoretical results, we further complemented the exact probability mass function derived in Proposition \ref{prop: dwell-time distribution} by an approximation using  Monte Carlo simulations.
While the latter can always easily be obtained, the exact theoretical result is of course advantageous with respect to both accuracy and computational cost.
Both the time-varying and overall dwell-time distribution can be computed conveniently using the function \texttt{ddwell()} from the \texttt{R} package \texttt{LaMa}. For efficient computation of the time-varying dwell-time distribution, this function recycles all factors in the product when evaluating Equation \eqref{eq: dwell-time distribution t} for a vector of dwell times. The computation of the overall dwell-time distribution then leverages the efficient recursive computation approach for the periodically stationary distribution outlined in Supplementary Material (S.2).

\begin{figure}%[!htb]
    \centering
    \includegraphics[width=14cm]{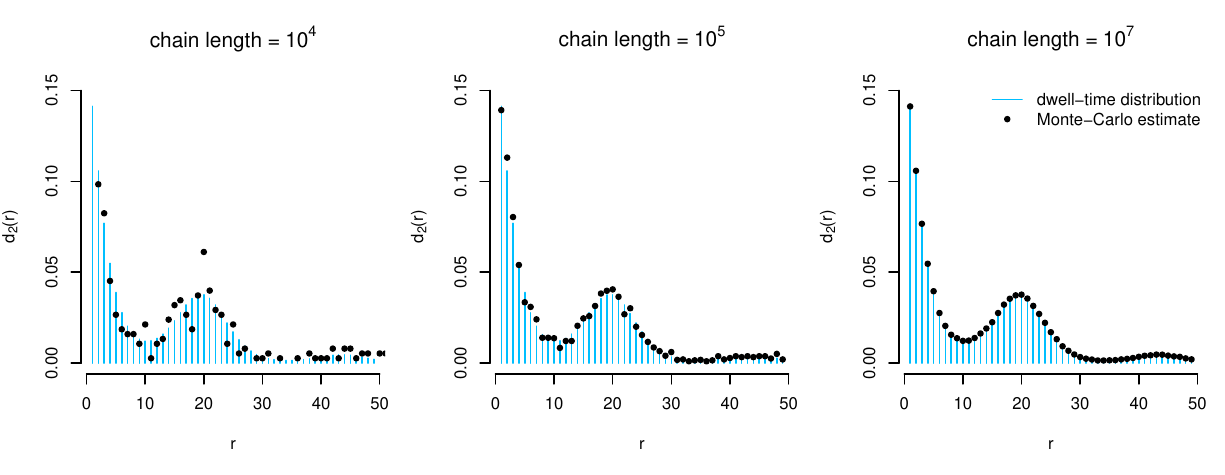}
    \caption{Overall state dwell-time distribution (state 2) of an example 2-state HMM with trigonometric modeling of periodic variation (see Supplementary Material (S.5) for the precise model configuration). Analytical distribution (bars) compared to Monte-Carlo-simulated state dwell times (dots) for different chain lengths.}
    \label{fig: dwell_distr_sim}
\end{figure}

The %analytical calculation of both the 
time-varying and the overall state dwell-time distributions can provide valuable insights into the dynamics of the state process. 
In addition, these results can be used to devise comprehensive model-checking tools for HMMs with periodic variation. 
For example, the model-implied dwell-time distributions can be compared against the empirical state dwell times obtained from state sequences simulated based on locally decoded state probabilities.

\section{Application: Activity of Drosophila melanogaster}

\subsection{Data}

We aim to study the influence of external conditions on the behavior of common fruit flies (\textit{Drosophila melanogaster}) and especially their circadian clock.
Like the majority of animal species, fruit flies restrict their behavioral activity to specific time periods during the 24-hour cycle, a mechanism referred to as the temporal niche.
In particular, synchronizing their circadian clocks to the common light-dark (LD) cycles improves the flies' fitness as they anticipate daily environmental changes \citep{beaver2002loss, bernhardt2020life}. %, which improves their fitness.
To investigate these diel activity patterns, 15 male wild-type fruit flies aged two to three days were entrained under a standard lighting schedule of twelve hours of light followed by twelve hours of darkness (LD condition) for four days. %  a duration of 
Subsequently, the flies were exposed to six consecutive days of uninterrupted darkness (DD condition). 
They were kept in special tubes that tracked their locomotor activity by counting each time a fly passed an infrared beam in the middle of the tube.
% To track the flies' locomotor activity, the study recorded the times each fly passed an infrared beam located in the middle of the tube in which they were kept.
We aggregated these counts into half-hour bins, leading to a total of 6840 observations ranging from 0-300.
%An example of the collected time-series data is shown in Figure~[REF], reflecting the clear diel activity patterns characterized by the fly's anticipation of the light transitions in the morning and evening. 
Boxplots of the activity counts for the different times of day, separated into LD and DD conditions, are shown in Figure~\ref{fig: boxplot}.
The strong diel activity pattern emerges from the flies' anticipation of the light transitions in the morning and evening in the LD condition. 
In constant darkness, the bimodality is much less pronounced as the flies lose these reference points.
We aim to precisely quantify such behavioral differences between the two light conditions, in particular by comprehensively studying the state-switching dynamics leading to the bimodal activity pattern. 

\begin{figure}
    \centering
    \includegraphics[width=14cm]{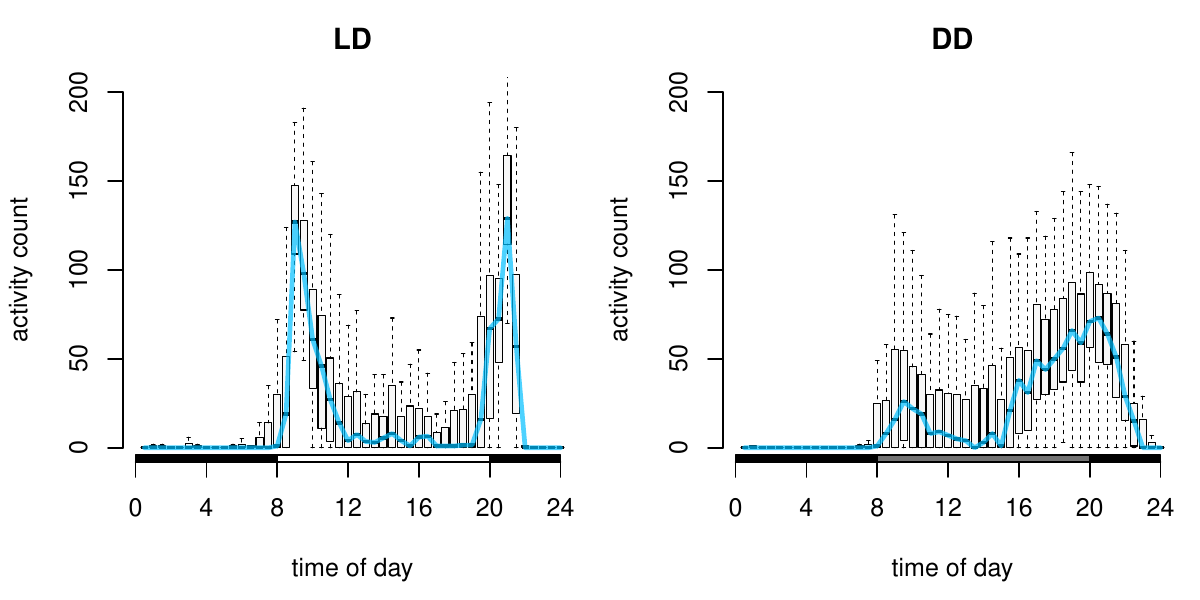}
    \caption{Boxplots of activity counts for each half-hour bin, including all individuals and all days but separated into LD and DD conditions. The thick line represents the median activity for visual clarity.}
    \label{fig: boxplot}
\end{figure}

\subsection{Model formulation}

To investigate the diel activity patterns, % in the motivating fruit fly data from Section~\ref{s:data},
we model the half-hourly activity counts (cf.\ Figure~\ref{fig: boxplot}) using a 2-state HMM with negative binomial state-dependent distributions, as was previously done in \citet{feldmann2023}.
%The state-dependent means are modeled as gamma-distributed random effects to allow for individual-specific differences in the flies' activity levels, while the dispersion parameters are fixed across individuals. % means of the negative binomial distributions
To allow for individual-specific differences in the flies' activity levels, we model the state-dependent means as gamma-distributed random effects, while the dispersion parameters of the negative binomial distributions are shared across individuals.
The state transition probabilities are modeled as functions of the time of day, ensuring % and its interaction with the lighting condition (i.e.\ LD and DD), thus effectively estimating separate state-process parameters for both conditions.
sufficient flexibility for capturing multiple activity peaks throughout the day via the use of trigonometric functions with wavelengths of 24, 12, and 8 hours, i.e.
$$
\text{logit} \bigl( \gamma_{ij}^{(t)} \bigr) =  \beta_0^{(ij)} + \sum_{k=1}^3 \beta_{1k}^{(ij)} \sin \Bigl( \frac{2\pi k t}{48} \Bigr) + \sum_{k=1}^3 \beta_{2k}^{(ij)} \cos \Bigl( \frac{2\pi k t}{48} \Bigr), \quad \text{ for } i\neq j. 
$$
As we are interested in behavioral differences between the lighting schedules, we estimate separate state-process parameters for the LD and DD conditions, respectively. % ...(i.e.\ LD and DD), we effectively estimate separate state-process parameters for both conditions by considering interactions between the lighting schedule and the periodic component. 
To estimate the model parameters, we numerically maximize the joint likelihood computed as the product of the different individuals' likelihoods.
The random effects were marginalized out using numerical integration \citep{schliehe2012application}. 
All models were implemented and fitted in \textbf{R} \citep{R2023} using a parallelized numerical optimization procedure \citep{gerber2019} to speed up the estimation.
The code and data for reproducing all results from this paper can be found at \url{https://github.com/janoleko/Drosophila}.

\subsection{Results}

The fitted HMM distinguishes a low- and a high-activity state for all flies, while allowing for individual differences in their mean activity levels (cf.\ Figure 10 in the Supplementary Material (S.4)).
To investigate the temporal variation in the state dynamics as well as the state dwell times, we apply the inferential tools developed in Section~\ref{sec:methods}. % for periodically inhomogeneous Markov chains.
The time-dependent unconditional state distributions as well as their approximations under the LD and the DD condition are shown in Figure~\ref{fig: deltarho flies}.
% for both the true as well as the approximate distribution often used in the literature
For both light conditions, the observed activity patterns (cf.\ Figure~\ref{fig: boxplot}) are adequately reflected by the true stationary distribution of the inhomogeneous Markov chain, with the activity peaking shortly after the light transitions experienced by the flies in the LD setup. 
While the morning peak in activity is more pronounced under the LD condition, in constant darkness (DD) the flies are active for a longer period of time in the evening hours. 
In contrast to the true distribution $\boldsymbol{\delta}^{(t)}$, the high-activity peaks of the approximation $\boldsymbol{\rho}^{(t)}$ differ in shape and, more importantly, are shifted in time, falsely indicating activity to peak about 1-3 hours earlier (cf.\ also the empirical activity distribution s
hown in Figure~\ref{fig: boxplot}).
Therefore, the approximate version would inevitably result in erroneous conclusions when interested in the exact time and length of flies' activity peaks.

\begin{figure}%[!htb]
    \centering
    \includegraphics[width=14cm]{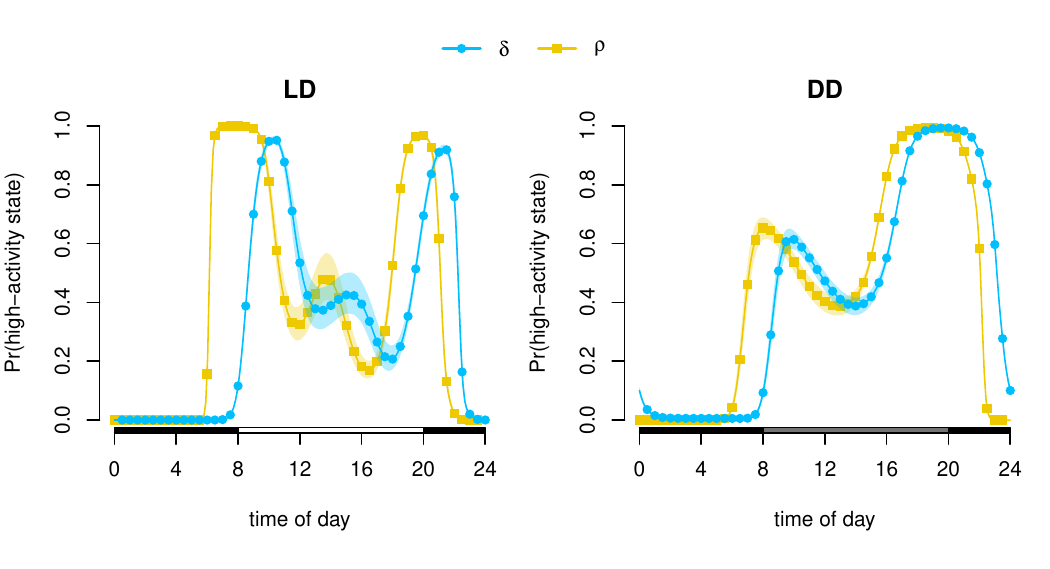}
    \caption{Periodic stationary distribution as a function of the time of day, for LD and DD condition. True stationary distribution (dots) compared to biased approximation (squares). Pointwise confidence intervals were obtained via Monte Carlo simulation from the approximate normal distribution of the maximum likelihood estimator.}
    \label{fig: deltarho flies}
\end{figure}

To further characterize the state dynamics within a cycle, the time-varying dwell-time distributions as derived in Proposition \ref{prop: dwell-time distribution t} can provide temporally fine-grained insights. 
Due to the model accounting for periodic variation in the state-switching dynamics, the model-implied dwell times differ substantially from a geometric distribution, as illustrated in Figure~\ref{fig: timevarying flies}.
Specifically for the morning times when the light was just switched on, the distribution's mode is clearly distinct from one --- in other words, the flies' response to the environmental change is not only to become active, but also to remain active for an extended period of time. During noon and in the early afternoon, activity bouts are more likely to be short. 
Overall, the time-varying dwell-time distributions thus reflect and complement the information gained from the stationary distribution (cf.\ Figure~\ref{fig: deltarho flies}).
However, since it is tedious and hardly feasible to look at all time-dependent dwell-time distributions for both states in the LD and DD condition ($48 \times 2 \times 2 = 192$ in total), it can be useful to instead plot the time-varying means of the distributions as derived in Proposition \ref{prop: expected value} (cf.\ Figure 11 in the Supplementary Material (S.4)).
% However, since it is tiresome and hardly feasible to look at all dwell-time distributions (48 for each state and light condition, in total 192), the time-varying distributional means can be plotted instead (cf.\ Figure~\ref{fig: mean dwell times} in the Appendix).
These mean dwell times summarize the distinct patterns of varying durations in the high- and low-activity state over a day and stress differences between both light conditions.
Solely focusing on the mean can however be accompanied by a substantial loss of information, as the time-varying dwell-time distributions can be multimodal (cf.\ Figure~\ref{fig: timevarying flies}). 

\begin{figure}
    \centering
    \includegraphics[width=14cm]{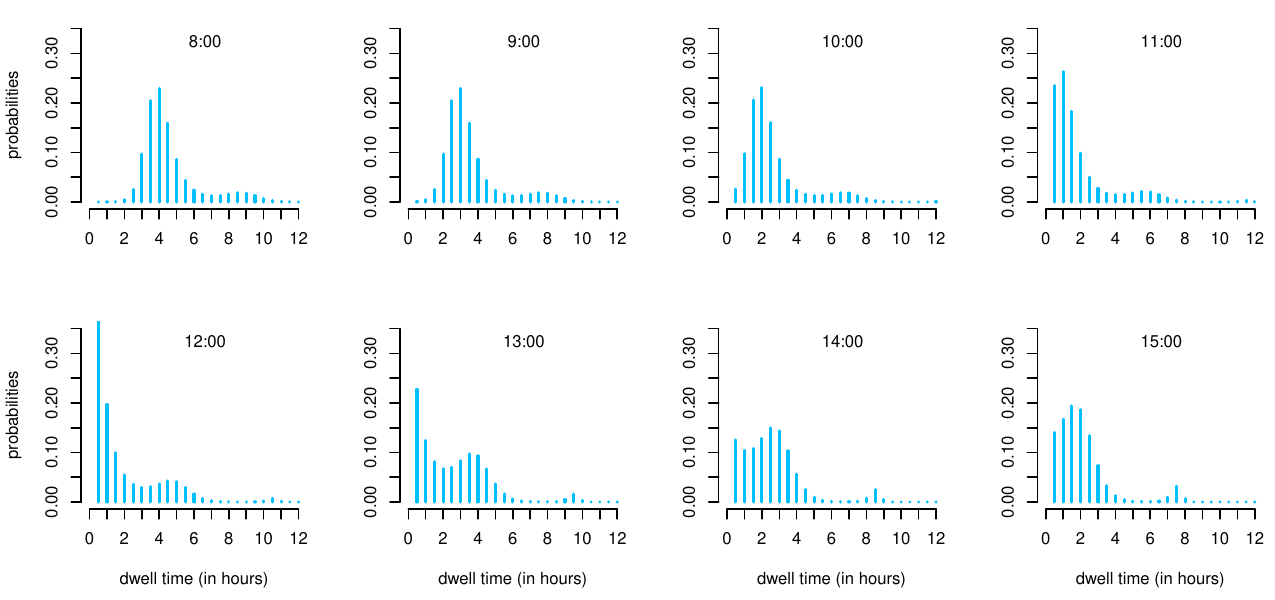}
    \caption{Estimated time-varying dwell-time distributions when initiating high-activity bouts between 8 am and 3 pm under LD condition}
    \label{fig: timevarying flies}
\end{figure}

As an alternative way of looking into the dwell times implied by the model --- not reducing the distributional information but instead %effectively 
averaging over time --- the overall dwell-time distributions as derived in Proposition \ref{prop: dwell-time distribution} are shown in Figure~\ref{fig: overall flies}. These unconditional dwell-time distributions in the active state emphasize overall differences in the state dynamics between the lighting schedules, % characterized by 
specifically showing activity patterns that differ in peak lengths (cf.\ Figure~\ref{fig: deltarho flies}).
In particular, the morning and evening activity peaks in the LD condition are relatively short and about equally long, which is reflected by dwell times mostly below five hours. % corresponding to similar dwell times as reflected in the unimodal overall distribution.
In contrast, the DD condition is characterized by relatively short activity bouts in the morning and notably longer bouts in the evening, leading to a bimodal dwell-time distribution, with considerable mass on lengths of 6--7 hours.
These ecologically interesting differences cannot be uncovered when considering only the distributional means, which are fairly similar in both conditions (LD: 2.7 hours; DD: 2.3 hours).

\begin{figure}[b]
    \centering
    \includegraphics[width=14cm]{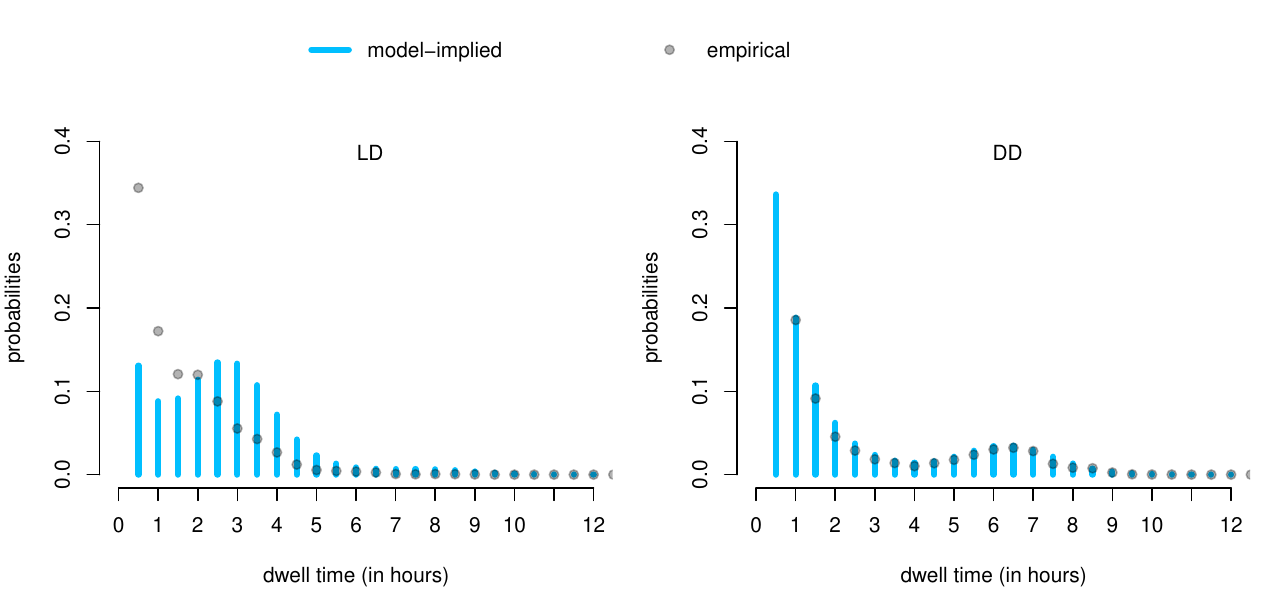}
    \caption{Overall dwell-time distribution of the high-activity state for LD and DD condition, analytically (bars) and empirically (gray dots) derived from the fitted HMM.}
    \label{fig: overall flies}
\end{figure}

In addition to its inferential use, the unconditional dwell-time distribution constitutes a valuable model-checking tool, as it allows for the comparison of the model-implied dwell times to the empirical ones obtained from the locally decoded state sequence.
Specifically, for Figure~\ref{fig: overall flies}, we drew 1000 state sequences based on the locally-decoded state probabilities. For each of these state sequences, we then computed the empirical dwell-time distribution based on a run-length encoding and ultimately averaged these frequencies to obtain the empirical p.m.f.\ depicted in the figure.
In the LD condition, we observe a minor lack of fit between the model-implied and the empirical dwell-time distributions, and a very good fit in the DD condition.
Compared to the model without temporal variation, which restricts the dwell-time distributions to be geometric, the model fit in the DD condition is substantially improved. In contrast, the fit for the LD condition looks slightly worse (see Figure 12 in the Supplementary Material (S.4)).

\subsection{Summary}

To summarize, using periodically inhomogeneous HMMs, we showed that fruit flies exhibit a strong diel pattern in activity under a regular light-dark cycle, which does not vanish but becomes more concentrated on the evening hours when subsequently faced with constant darkness. 
Based on the inferred state dwell-time distributions, the analysis further revealed that dwell times of up to five hours were most likely in the LD condition, whereas dwell times follow a bimodal distribution with mainly very short ($< 2$ hours) or long (6--7 hours) durations in the DD condition. 
These results on common fruit flies' diel patterns and their adaptations to different lighting schedules will serve as a basis for future research to study variation in genetically mutated flies and their temporal niche mechanisms.

\section{Discussion}
\label{s:discuss}

We established several properties of periodically inhomogeneous Markov chains (and hence also corresponding HMMs), specifically the time-varying (unconditional) state distribution, the time-varying and overall state dwell-time distributions, and a % goodness-of-fit test 
model-checking procedure based on the latter. Knowledge of these properties and tools has welcome implications for applied work using corresponding model formulations. 

First, the state distribution as a function of covariates, as displayed for example in Figure~\ref{fig: deltarho flies}, is often a key output of interest in empirical analyses based on HMMs. 
For settings in which the covariate dependence is periodic, for example when modeling diel variation or seasonality, we provide an exact solution to the time-varying state distribution, thus avoiding the need to resort to the approximation commonly considered in the literature \citep{farhadinia2020understanding, byrnes2021evaluating}, which, as demonstrated, can be very poor and hence can lead to invalid inference.

Second, our results concerning the state dwell-time distributions provide practitioners with a valuable new inferential tool. In many HMM-based analyses, the stochastic dynamics of the latent state process are of main interest \citep{jackson2003multistate, mcclintock2020uncovering}. 
In this regard, our results allow users for the first time to consider and report as an important analysis outcome the true model-implied distribution of time spent in the different states, either at fixed times or as a global summary. 
This can reveal insights that would otherwise have been missed, for example the second mode of the dwell-time distribution in the flies' high-activity state when in constant darkness (Figure~\ref{fig: overall flies}).

As a consequence of our findings related to the state dwell-time distributions, we were able to demonstrate that a common criticism expressed towards basic (homogeneous) HMMs, namely that the implied geometric dwell times are often unrealistic, can to a large extent be remedied by inhomogeneous modeling of the state-switching dynamics. 
However, substantial deviations from geometric dwell-time distributions will arise only when the data exhibit considerable periodic variation. 
In some cases, the process to be analyzed involves non-geometric dwell-time distributions that can not be explained by periodic variation or in fact any other measurable covariates, but are instead driven by non-observable internal processes. Hidden semi-Markov models (HSMMs; \citealp{guedon2003estimating}), in which an additional distribution on the positive integers is specified to model the dwell times, may then be effective tools for capturing the dynamics of the state process. HSMMs can also be complemented by (periodic) covariates \citep{koslik2025hidden, lagona2024nonhomogeneous}, thus allowing for the dynamics of the state process --- and consequently the dwell-time distributions --- to be partly explained by external factors. Indeed, for periodic covariates, the periodically stationary state distribution and the overall dwell-time distribution can be derived analogously to Section \ref{sec:methods} for HSMMs (see \citealp{koslik2025hidden}).

Finally, our results allow us to devise a new approach to checking the goodness-of-fit of HMMs involving periodic variation. Specifically,
% Established tools for checking the fit of HMMs have focused primarily on the adequacy of the state-dependent process, assessed for example using pseudo-residuals \citep{buckby2020model}.
a comparison of the model-implied state dwell-time distributions to the ``empirical ones'' obtained from the locally decoded state sequences 
% In contrast, an inspection of the state dwell-time distributions % --- both visually (as in Figure~\ref{fig: overall flies}) but also formally via a chi-squared test --- 
corresponds to a check whether the system's state process is adequately modeled. 
If the state dynamics are indeed of key interest, then they should usually be the central focus of a goodness-of-fit check, and this new type of model check may sometimes be more targeted than the use of pseudo-residuals \citep{buckby2020model}.

The periodic setting considered in this contribution allowed us to analytically derive the summary statistics of interest due to the implied deterministic nature of the covariate (time). 
For the general case, i.e.\ with non-deterministic time-varying covariates such as temperature or precipitation, both the state occupancy and the state dwell-time distribution will depend on the characteristics of the covariate process.
Specifically, both summary statistics are effectively a result of the dynamics of the Markov chain, which in theory depends on the complete history of covariate values, such that the covariates' autocorrelation structure is clearly relevant. 
The natural way to address this would seem to be the formulation and estimation of an additional model for the covariate process, e.g.\ a (vector-)autoregressive process, based on which the desired summary statistics can be obtained using Monte Carlo methods --- we are currently exploring this idea.

% \backmatter

%  This section is optional.  Here is where you will want to cite
%  grants, people who helped with the paper, etc.  But keep it short!

\begin{acks}[Acknowledgments]
The authors are very grateful to Angelica Coculla and Ralf Stanewsky for providing the Drosophila melanogaster activity data. %The authors gratefully acknowledge funding
\end{acks}

\begin{funding}
This research was funded by the German Research Foundation (DFG) as part of the SFB TRR 212 (NC$^3$), project numbers 316099922 and 396782756.
\end{funding}

% \section*{Supporting Information}

% The supplementary material includes proofs for all stated propositions, information on additional properties, and additional figures. Furthermore, the data and code for reproducing all results from the paper can be found at \url{https://github.com/janoleko/Drosophila}.

\begin{supplement}
\stitle{Supplementary Material and R code}
\sdescription{The supplementary material includes proofs for all stated propositions, information on additional properties, additional figures as well as the data and R code for reproducing all results from the paper. It can also be found at \url{https://github.com/janoleko/Drosophila}.}
\end{supplement}

\bibliographystyle{apalike}
\bibliography{refs}

\newpage

\renewcommand{\thesection}{S}

\section{Supplementary material}

\subsection{Proofs}
\label{A: proofs}

\begin{proof}[Proof of Proposition 2.1]
    \begin{align*}0
        d_i^{(t)}(r) &= \Pr(S_{t+r} \neq i, S_{t+r-1} = i, \dotsc, S_{t+1} = i \mid S_t = i, S_{t-1} \neq i) \\
        &= \Pr(S_{t+r} \neq i, S_{t+r-1} = i, \dotsc, S_{t+1} = i \mid S_t = i) \\
        &= \Pr(S_{t+r} \neq i \mid S_{t+r-1} = i) \Pr(S_{t+r-1} = i \mid S_{t+r-2} = i) \dotsc \Pr(S_{t+1} = i \mid S_t = i)\\
        &= (1-\gamma_{ii}^{(t+r-1)}) \prod_{j=1}^{r-1} \gamma_{ii}^{(t+j-1)}
    \end{align*}
\end{proof}
It is evident that the time-varying distribution is identical when the chain has just transitioned into state $i$ or when only conditioning on the chain currently being in state $i$.

\begin{lemma}
    \label{lem: sum_prod}
    Let $\bm{\Gamma}^{(t)}$ be the periodically varying t.p.m.\ of a Markov chain $\{S_t\}$ as defined in Equation (1). Then 
    $$
    %\sum_{r = 1}^L d_i^{(t)}(r) = 
    \sum_{r = 1}^L \bigl(1-\gamma_{ii}^{(t+r-1)}\bigr) \prod_{j = 1}^{r-1}\gamma_{ii}^{(t+j-1)} = 1-\prod_{j = 1}^{L}\gamma_{ii}^{(j)}$$
    for all $L \in \mathbb{N}$. 
\end{lemma}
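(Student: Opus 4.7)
The plan is to recognize the sum on the left as a telescoping sum, which collapses directly to the right-hand side, and then to invoke the periodicity assumption to rewrite the resulting product over a full cycle in the form stated.

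Concretely, I would abbreviate $a_k := \gamma_{ii}^{(t+k-1)}$ for $k = 1, \dotsc, L$, with the convention that the empty product $\prod_{j=1}^{0} a_j$ equals $1$. The generic summand then factors as
\begin{equation*}
\bigl(1 - a_r\bigr) \prod_{j=1}^{r-1} a_j \;=\; \prod_{j=1}^{r-1} a_j \;-\; \prod_{j=1}^{r} a_j,
\end{equation*}
so that the sum over $r = 1, \dotsc, L$ telescopes and yields $1 - \prod_{j=1}^{L} a_j = 1 - \prod_{j=1}^{L} \gamma_{ii}^{(t+j-1)}$. The final step is to observe that, by the periodicity $\bm{\Gamma}^{(t)} = \bm{\Gamma}^{(t+L)}$ from Equation~(1), the product $\prod_{j=1}^{L} \gamma_{ii}^{(t+j-1)}$ runs over exactly one full cycle of indices modulo $L$ and therefore equals $\prod_{j=1}^{L} \gamma_{ii}^{(j)}$, independently of $t$.

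As a sanity check, the same identity admits a probabilistic reading via Proposition 2.1: the sum $\sum_{r=1}^L d_i^{(t)}(r)$ is the probability that a stay in state $i$ starting at time $t$ ends within the next $L$ time steps, which is $1$ minus the probability of remaining in state $i$ for all $L$ consecutive transitions, i.e.\ $1 - \prod_{j=1}^{L}\gamma_{ii}^{(t+j-1)}$. I would mention this only as intuition, since the algebraic telescoping argument is self-contained and does not rely on the preceding proposition.

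There is no real obstacle here; the only thing one has to be slightly careful about is the empty-product convention at $r = 1$ and the cyclic reindexing at the end. Overall the proof should be two or three short lines of display math.
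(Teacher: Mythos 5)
Your proof is correct, but it takes a different route from the paper's. The paper proves the identity by induction on $L$ (after setting $t=1$ without loss of generality): the base case is the empty product at $L=1$, and the induction step peels off the $r=L+1$ summand and absorbs it into $1-\prod_{j=1}^{L}\gamma_{ii}^{(j)}$. Your telescoping decomposition $\bigl(1-a_r\bigr)\prod_{j=1}^{r-1}a_j = \prod_{j=1}^{r-1}a_j - \prod_{j=1}^{r}a_j$ collapses the sum in one line and makes the identity essentially self-evident; it is the more transparent argument, and the induction step in the paper is really just this same cancellation carried out one term at a time. Your version also has the small advantage of treating general $t$ explicitly: the telescoping gives $1-\prod_{j=1}^{L}\gamma_{ii}^{(t+j-1)}$, and you then invoke periodicity to reindex the full-cycle product to $\prod_{j=1}^{L}\gamma_{ii}^{(j)}$, whereas the paper's ``without loss of generality $t=1$'' quietly relies on the same observation. (One caveat worth noting for both proofs: the reindexing step, and hence the $t$-independence of the right-hand side, holds only when $L$ is the period of the chain, even though the lemma is phrased ``for all $L \in \mathbb{N}$''; for the induction as written with $t=1$ this issue does not arise, and in the paper the lemma is only ever applied with $L$ equal to the period.) The probabilistic reading you give as a sanity check is sound and matches how the lemma is actually used later, namely to evaluate $\sum_{r=1}^{L}d_i^{(t)}(r)$.
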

\begin{proof}
    Without loss of generality, we set $t=1$. We prove the above via induction. \\    
    \textit{Base case: }For $L=1$, it holds that
    $$1 - \gamma_{ii}^{(1)} = 1 - \gamma_{ii}^{(1)}$$
    due to the empty product on the left side.  
    \newpage
    \textit{Induction step:} If the equality holds for $L$,
    %, i.e.\ $\sum_{r = 1}^L \big(1-\gamma_{ii}^{(r)}\big) \prod_{j = 1}^{r-1}\gamma_{ii}^{(j)} = 1-\prod_{j = 1}^{L}\gamma_{ii}^{(j)}$,
    then it also holds for $L+1$ for every $L \in \mathbb{N}$ since
    \begin{align*}
        \sum_{r = 1}^{L+1} \big(1-\gamma_{ii}^{(r)}\big) \prod_{j = 1}^{r-1}\gamma_{ii}^{(j)} &= \big(1-\gamma_{ii}^{(L+1)}\big) \prod_{j=1}^{(L+1)-1}\gamma_{ii}^{(j)} + \sum_{r = 1}^{L} \big(1-\gamma_{ii}^{(r)}\big) \prod_{j = 1}^{k-1}\gamma_{ii}^{(j)} \\
        &= \big(1-\gamma_{ii}^{(L+1)}\big) \prod_{j=1}^{L}\gamma_{ii}^{(j)} + \big(1- \prod_{j = 1}^{L} \gamma_{ii}^{(j)}\big) \\
        &= 1-\prod_{j = 1}^L \gamma_{ii}^{(j)}\big(-1+\gamma_{ii}^{(L+1)}+1\big) \\
        &= 1-\prod_{j = 1}^{L}\gamma_{ii}^{(j)}\gamma_{ii}^{(L+1)} \\
        &= 1-\prod_{j = 1}^{L+1}\gamma_{ii}^{(j)}.
    \end{align*}
    We use the induction condition in the second step, the remaining calculations are straightforward.
\end{proof}

\begin{proof}[Proof of Proposition 2.2]
\label{proof: expected value}
\begin{align*}
    \mathbb{E} R_i^{(t)} &=\sum_{r=1}^\infty r  d_i^{(t)}(r) \\
    &= \sum_{r=1}^\infty r \bigl( 1-\gamma_{ii}^{(t+r-1)} \bigr) \prod_{j=1}^{r-1} \gamma_{ii}^{(t+j-1)} \\
%\end{align*}
% We can split up the infinite sum into partial sums for each period of length $L$, which gives
% \begin{align*}
    &=\sum_{k=0}^\infty \sum_{r = kL+1}^{kL+L} r \bigl( 1-\gamma_{ii}^{(t+r-1)} \bigr)  \prod_{j=1}^{r-1} \gamma_{ii}^{(t+j-1)}.
 \end{align*}
 Then, we see that in each summand with $k>0$ (from which follows $r>L$ in the inner sum) at least one \textit{full-length} product $\prod_{j=1}^L \gamma_{ii}^{(j)}$ is contained due to Equation (1). More specifically, a summand with $kL < r \leq (k+1)L$ contains $k$ full-length products, i.e.\ $(\prod_{j=1}^L \gamma_{ii}^{(j)})^k$ which is independent of $r$. Therefore, we obtain
\begin{align*}
    \mathbb{E} R_i^{(t)} = \dotsc = \sum_{k=0}^\infty \Bigl( \prod_{j=1}^L \gamma_{ii}^{(t+j-1)} \Bigr)^k \sum_{r = kL+1}^{kL+L} r \bigl( 1-\gamma_{ii}^{(t+r-1)} \bigr) \prod_{j=kL+1}^{r-1} \gamma_{ii}^{(t+j-1)}.
\end{align*}
Now, we see that again by virtue of Equation (1) the remaining product does not depend on the number of periods the chain stays in state $i$. Thus, by changing the indices, we arrive at
\begin{align*}
    \mathbb{E} R_i^{(t)} &= \dotsc \\
    &=\sum_{k=0}^\infty \Bigl( \prod_{j=1}^L \gamma_{ii}^{(t+j-1)} \Bigr)^k \sum_{r = 1}^{L} (kL +  r) \bigl( 1-\gamma_{ii}^{(t+r-1)} \bigr) \prod_{j=1}^{r-1} \gamma_{ii}^{(t+j-1)} \\
%\end{align*}
%Multiplying out, we obtain
%\begin{align*}
    &= \sum_{k=0}^\infty \Bigl( \prod_{j=1}^L \gamma_{ii}^{(t+j-1)} \Bigr)^k
    \Bigl( kL \sum_{r = 1}^{L} \bigl( 1-\gamma_{ii}^{(t+r-1)} \bigr) \prod_{j=1}^{r-1} \gamma_{ii}^{(t+j-1)} + \sum_{r = 1}^{L} r \bigl( 1-\gamma_{ii}^{(t+r-1)} \bigr) \prod_{j=1}^{r-1} \gamma_{ii}^{(t+j-1)} \Bigr)\\
    &= \Bigl( \sum_{r = 1}^{L} r \bigl( 1-\gamma_{ii}^{(t+r-1)} \bigr) \prod_{j=1}^{r-1} \gamma_{ii}^{(t+j-1)} \Bigr) \sum_{k=0}^\infty \Bigl( \prod_{j=1}^L \gamma_{ii}^{(t+j-1)} \Bigr)^k\\
    &+ L \Bigl( \sum_{r=1}^L \bigl( 1-\gamma_{ii}^{(t+r-1)} \bigr) \prod_{j=1}^{r-1} \gamma_{ii}^{(t+j-1)} \Bigr) \sum_{k=0}^\infty k \Bigl( \prod_{j=1}^L \gamma_{ii}^{(t+j-1)} \Bigr)^k\\
    &= \frac{\sum_{r=1}^L r \bigl( 1-\gamma_{ii}^{(t+r-1)} \bigr) \prod_{j=1}^{r-1} \gamma_{ii}^{(t+j-1)}}{1-\prod_{j=1}^L \gamma_{ii}^{(t+j-1)}}\\
    &+ L \Bigl( \sum_{r=1}^L \bigl( 1-\gamma_{ii}^{(t+r-1)} \bigr) \prod_{j=1}^{r-1} \gamma_{ii}^{(t+j-1)} \Bigr) \frac{\prod_{j=1}^L \gamma_{ii}^{(t+j-1)}}{ \bigl( 1-\prod_{j=1}^L \gamma_{ii}^{(t+j-1)} \bigr)^2}\\
    &= \frac{\sum_{r=1}^L r \bigl( 1-\gamma_{ii}^{(t+r-1)} \bigr) \prod_{j=1}^{r-1} \gamma_{ii}^{(t+j-1)}}{1-\prod_{j=1}^L \gamma_{ii}^{(t+j-1)}}
    + L \Bigl( 1 - \prod_{j=1}^L \gamma_{ii}^{(t+j-1)}\Bigr) \frac{\prod_{j=1}^L \gamma_{ii}^{(t+j-1)}}{\bigl( 1-\prod_{j=1}^L \gamma_{ii}^{(t+j-1)} \bigr)^2}\\
    &= \frac{\sum_{r=1}^L r d_i^{(t)}(r)+L \prod_{j=1}^L \gamma_{ii}^{(t+j-1)}}{1-\prod_{j=1}^L \gamma_{ii}^{(t+j-1)}}\\
    &= \frac{\sum_{r=1}^L r d_i^{(t)}(r) + L \bigl( 1- \sum_{r=1}^L d_i^{(t)}(r) \bigr)}{\sum_{r=1}^L d_i^{(t)}(r)}\\
    &= \frac{\sum_{r=1}^L r d_i^{(t)}(r) + L - L \sum_{r=1}^L d_i^{(t)}(r)}{\sum_{r=1}^L d_i^{(t)}(r)}\\
    &= \frac{L + \sum_{r=1}^L r d_i^{(t)}(r)}{\sum_{r=1}^L d_i^{(t)}(r)} -L,
\end{align*}
where in the fifth equality, we used the geometric sum and its derivative to calculate the infinite sums. Afterwards, we applied Lemma \ref{lem: sum_prod} in the second summand, simplified the fraction and applied Lemma \ref{lem: sum_prod} again.
\end{proof}

\newpage

\begin{proof}[Proof of Proposition 2.3]
    We need to obtain the overall dwell-time distribution as a mixture of the time-varying dwell-time distributions. For the weighting, we need to consider a random variable $\tau$ on the support $\{1, \dotsc, L\}$ that is uniformly distributed with $\Pr (\tau=t) = \frac{1}{L}$ for all $t = 1, \dotsc, L$. The realization of $\tau$ gives rise to every time point $t \in \{1, \dotsc, L\}$ considered in the mixture as a starting time for a dwell time. Then we can rewrite the time-varying dwell-time distributions as
    \begin{align*}
        d_i^{(t)}(r) = \Pr (S_{\tau+r} \neq i, S_{\tau+r-1} = i, \dotsc, S_{\tau+1} = i \mid S_{\tau} = i, S_{\tau-1} \neq i, \tau = t).
    \end{align*}
The probability we are interested in as the overall dwell-time distribution is 
    \begin{align*}
        \Pr (S_{\bigcdot+r} \neq i, S_{\bigcdot+r-1}=i, \dots, S_{\bigcdot+1}=i \mid S_{\bigcdot} = i, S_{\bigcdot-1} \neq i).
    \end{align*}
The dot notation is used to emphasize that this probability for the Markov chain to transition to state $i$ from any other state and to then stay in state $i$ $r$ times is unconditional of the time point in the cycle. We condition on the event that the transition $\neg i \rightarrow i$ has happened at some arbitrary time point in the cycle, prior to the stay.
Then, we can obtain the overall dwell-time distribution as
    \begin{align*}
    d_i(r) &= \Pr(S_{\bigcdot+r} \neq i, S_{\bigcdot+r-1} = i, \dotsc, S_{\bigcdot+1} = i \mid S_{\bigcdot} = i, S_{\bigcdot-1} \neq i) \\
    &= \frac{\Pr(S_{\bigcdot+r} \neq i, S_{\bigcdot+r-1} = i, \dotsc, S_{\bigcdot+1} = i, S_{\bigcdot} = i, S_{\bigcdot-1} \neq i)}{\Pr(S_{\bigcdot} = i, S_{\bigcdot-1} \neq i)} \\
    &= \frac{\sum_{t=1}^L \Pr(S_{\tau+r} \neq i, S_{\tau+r-1} = i, \dotsc, S_{\tau+1} = i, S_{\tau} = i, S_{\tau-1} \neq i, \tau = t)}{\Pr(S_{\bigcdot} = i, S_{\bigcdot-1} \neq i)} \\
    &= \sum_{t=1}^L \Pr(S_{\tau+r} \neq i, S_{\tau+r-1} = i, \dotsc, S_{\tau+1} = i \mid S_{\tau} = i, S_{\tau-1} \neq i, \tau = t) \\
    & \qquad \frac{\Pr(S_{\tau} = i, S_{\tau-1} \neq i, \tau = t)}{\Pr(S_{\bigcdot} = i, S_{\bigcdot-1} \neq i)}\\
    &= \sum_{t=1}^L d_i^{(t)}(r)  w_i^{(t)}.
    \end{align*}
 We now need to show that we can calculate the mixture weights explicitly to arrive at the mixture weights $ w_i^{(t)}$ as defined before, precisely that
    \begin{align*}
        w_i^{(t)}
        &= \frac{\Pr(S_{\tau} = i, S_{\tau-1} \neq i, \tau = t)}{\Pr(S_{\bigcdot} = i, S_{\bigcdot-1} \neq i)} \\
        &= \frac{\sum_{l \in \mathcal{S} \setminus i} \delta_l^{(t-1)} \gamma_{li}^{(t-1)}}{\sum_{t=1}^L \sum_{l \in \mathcal{S} \setminus i} \delta_l^{(t-1)} \gamma_{li}^{(t-1)}},
    \end{align*}
    where $\mathcal{S} = \{1, \dotsc, N\}$.
    We therefore consider the numerator and denominator separately. For the numerator, we need to consider all possible paths of the Markov chain from all states $l \neq i$ to state $i$:
    \begin{align*}
        \Pr(S_{\tau} = i, S_{\tau-1} \neq i, \tau = t) &= \sum_{l \in \mathcal{S} \setminus i} \Pr(S_{\tau} = i \mid S_{\tau-1} = l, \tau = t) \Pr(S_{\tau-1} = l \mid \tau = t) \Pr(\tau=t)\\
        &= \sum_{l \in \mathcal{S} \setminus i} \delta_l^{(t-1)} \gamma_{li}^{(t-1)} \frac{1}{L}.
    \end{align*}
    For the denominator, we obtain
    \begin{align*}
        \Pr(S_{\bigcdot} = i, S_{\bigcdot-1} \neq i) &= \sum_{t=1}^L \Pr(S_{\tau} = i, S_{\tau-1} \neq i, \tau = t) \\
        &= \frac{1}{L} \sum_{t=1}^L \sum_{l \in \mathcal{S} \setminus i} \delta_l^{(t-1)} \gamma_{li}^{(t-1)}.
    \end{align*}
Furthermore, we can calculate the mean of the distribution as
    \begin{align*}
    \mathbb{E} R_i &= \sum_{r=1}^\infty r d_i(r)\\
    &= \sum_{r=1}^\infty r \sum_{t=1}^L w_i^{(t)} d_i^{(t)}(r) \\
    % &= \sum_{r=1}^\infty \sum_{t=1}^L r w_i^{(t)} d_i^{(t)}(r) \\
    &= \sum_{t=1}^L w_i^{(t)} \sum_{r=1}^\infty r  d_i^{(t)}(r) \\
    &= \sum_{t=1}^L w_i^{(t)} \mathbb{E} R_i^{(t)}\\
    &= \sum_{t=1}^L w_i^{(t)} \left( \frac{L+\sum_{r=1}^L r d_i^{(t)}(r)}{\sum_{r=1}^L d_i^{(t)}(r)} - L\right) \\
    &= \sum_{t=1}^L w_i^{(t)} \frac{L+\sum_{r=1}^L r d_i^{(t)}(r)}{\sum_{r=1}^L d_i^{(t)}(r)} - L \sum_{t=1}^L w_i^{(t)}\\
    &= \sum_{t=1}^L \left( w_i^{(t)} \frac{L+\sum_{r=1}^L r d_i^{(t)}(r)}{\sum_{r=1}^L d_i^{(t)}(r)} \right) - L,
    \end{align*}
where the third equality is justified by the Fubini-Tonelli-theorem.
\end{proof}

\newpage

\subsection{Recursive calculation of the periodically stationary distribution}
\label{A1: Recursive}

Given a time-varying state distribution $\bm{\delta}^{(t)}$, for any $t \in \{1,\ldots,L\}$, as defined in Equation (3), the remaining $L-1$ stationary distributions can be calculated recursively. More formally, let $\bm{\delta}^{(t)}$ be the solution to $\bm{\delta}^{(t)} \Tilde{\bm{\Gamma}}_t = \bm{\delta}^{(t)}$. Then
    \begin{equation}
        \bm{\delta}^{(t+1)} = \bm{\delta}^{(t)} \bm{\Gamma}^{(t)}
    \end{equation}
    is the solution to
    $$\bm{\delta}^{(t+1)} \Tilde{\bm{\Gamma}}_{t+1} = \bm{\delta}^{(t+1)},$$ 
    since
    \begin{align*}
        \bm{\delta}^{(t+1)} \Tilde{\bm{\Gamma}}_{t+1} &= \bm{\delta}^{(t+1)} \bm{\Gamma}^{(t+1)} \bm{\Gamma}^{(t+2)} \dotsc \bm{\Gamma}^{(t+L)} \\
        &= \bm{\delta}^{(t)} \bm{\Gamma}^{(t)} \bm{\Gamma}^{(t+1)} \bm{\Gamma}^{(t+2)} \dotsc \bm{\Gamma}^{(t+L-1)} \bm{\Gamma}^{(t+L)} \\
        &= \bm{\delta}^{(t)} \Tilde{\bm{\Gamma}}_t \bm{\Gamma}^{(t+L)} = \bm{\delta}^{(t)} \bm{\Gamma}^{(t+L)} \\
        &= \bm{\delta}^{(t)} \bm{\Gamma}^{(t)} = \bm{\delta}^{(t+1)}.
    \end{align*}

For each of the $L$ time points, a naive approach involves forming the product of $L$ $N \times N$ matrices (requiring $L-1$ multiplications per point) and solving the linear system $\bm{\delta}^{(t)} \Tilde{\bm{\Gamma}}_t = \bm{\delta}^{(t)}$, requiring on the order of $N^3 L^2$ operations. In contrast, the recursive approach described above performs the full product and solves the system only for one time point, %, which requires on the order of $N^3 L$ operations. 
before computing the remaining $L-1$ results using $1 \times N$ vector by $N\times N$ matrix multiplications. %, requiting on the order of $(L-1)N^2$ operations. 
This yields a total of on the order of $N^3 L + (L-1)N^2$ operations. Both methods are generally minor compared to a single likelihood evaluation, which typically requires on the order of $N^2 T$ operations with $T \gg L$ and $N$ usually small.

\subsection{Memorylessness of the time-varying and overall dwell-time distributions}
\label{A2: Memorylessness}

\begin{proposition}
    Consider a periodically stationary Markov chain with period length $L$ as defined by Equations (1) and (3) and let $R_i^{(t)}$ denote the dwell time in state $i$ beginning at time $t$, following the time-varying dwell-time distribution as defined in Proposition 2.1. Then:
    \begin{enumerate}
        \item The time-varying dwell-time distribution satisfies a periodic memorylessness property in that
        $\Pr(R_i^{(t)}>L+s \mid R_i^{(t)}>L) = \Pr(R_i^{(t)} > s)$ for all $s \in \mathbb{N}$.
        \item The overall dwell-time distribution inherits this property.
    \end{enumerate}
\end{proposition}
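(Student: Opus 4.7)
The plan is to work directly with the survival function of $R_i^{(t)}$. Starting from the p.m.f.\ in Proposition 2.1, a short telescoping calculation -- or equivalently the observation that $\{R_i^{(t)} > r\}$ is just the event $S_{t+1} = \dotsc = S_{t+r} = i$ -- gives
\begin{equation*}
\Pr(R_i^{(t)} > r) \;=\; \prod_{k=1}^r \gamma_{ii}^{(t+k-1)}, \qquad r \in \mathbb{N}.
\end{equation*}
With this closed form in hand, the rest is essentially bookkeeping with the periodicity assumption $\bm{\Gamma}^{(t)} = \bm{\Gamma}^{(t+L)}$.

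For part (1), the key observation is that a product of $L$ consecutive diagonal entries covers exactly one full cycle and is therefore independent of its starting index: $\prod_{k=1}^L \gamma_{ii}^{(t+k-1)} = \prod_{k=1}^L \gamma_{ii}^{(k)}$ for every $t$. Splitting the product $\prod_{k=1}^{L+s} \gamma_{ii}^{(t+k-1)}$ at $k = L$ and re-indexing the second block via $\gamma_{ii}^{(t+L+j-1)} = \gamma_{ii}^{(t+j-1)}$ yields the factorization $\Pr(R_i^{(t)} > L+s) = \Pr(R_i^{(t)} > L)\, \Pr(R_i^{(t)} > s)$, from which the claimed conditional identity follows immediately upon dividing.

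For part (2), I would apply the mixture representation to the survival function, $\Pr(R_i > r) = \sum_{t=1}^L w_i^{(t)} \Pr(R_i^{(t)} > r)$. The decisive step is that $c := \Pr(R_i^{(t)} > L) = \prod_{k=1}^L \gamma_{ii}^{(k)}$ is a single constant that does \emph{not} depend on $t$, so it can be pulled out of the mixture. Substituting the part-(1) factorization inside the sum then gives $\Pr(R_i > L+s) = c \sum_{t=1}^L w_i^{(t)} \Pr(R_i^{(t)} > s) = c\, \Pr(R_i > s)$, while $\Pr(R_i > L) = c$ because the weights sum to one. Dividing yields the claimed memorylessness for $R_i$.

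The only delicate point -- and precisely what allows the mixture to inherit the property -- is this $t$-independence of $\Pr(R_i^{(t)} > L)$. If the full-cycle product depended on $t$, the factorization in part (1) would still hold, but the weights $w_i^{(t)}$ would obstruct pulling the factor cleanly out of the sum in part (2). Periodicity of $\bm{\Gamma}^{(t)}$ is exactly what rules this out, so once the survival formula is established there is no real obstacle to the argument.
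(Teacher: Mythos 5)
Your proof is correct. The route is recognizably the same in spirit as the paper's --- both hinge on factoring out the full-cycle product $\prod_{k=1}^L \gamma_{ii}^{(t+k-1)}$, noting it is independent of $t$, and shifting indices by $L$ via periodicity --- but your execution is genuinely more streamlined. The paper writes each survival probability as an infinite sum of p.m.f.\ terms, $\Pr(R_i^{(t)}>m)=\sum_{r>m} d_i^{(t)}(r)$, factors the full-length product out of every summand, and invokes its telescoping Lemma (S.1) to evaluate $1-\Pr(R_i^{(t)}\le L)$ in the denominator; part (2) then involves manipulating double infinite sums. You instead establish the closed-form survival function $\Pr(R_i^{(t)}>r)=\prod_{k=1}^r\gamma_{ii}^{(t+k-1)}$ up front (either by the telescoping identity or, more directly, by observing that $\{R_i^{(t)}>r\}$ is the event $S_{t+1}=\dotsb=S_{t+r}=i$), after which both parts reduce to splitting and re-indexing finite products --- no infinite sums and no separate lemma needed. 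Your closing remark correctly isolates the one structural fact that makes part (2) work, namely the $t$-independence of $\Pr(R_i^{(t)}>L)$, which is exactly the quantity the paper computes (as $\prod_{k=1}^L\gamma_{ii}^{(k)}$) before pulling it out of the weighted sum. The only detail worth making explicit in a final write-up is that the mixture representation of the survival function, $\Pr(R_i>r)=\sum_{t=1}^L w_i^{(t)}\Pr(R_i^{(t)}>r)$, follows from the p.m.f.\ mixture by summing nonnegative terms (Tonelli), but this is immediate.
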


\renewcommand{\thefigure}{9}
\begin{figure}
    \centering
    \includegraphics[width = 12cm]{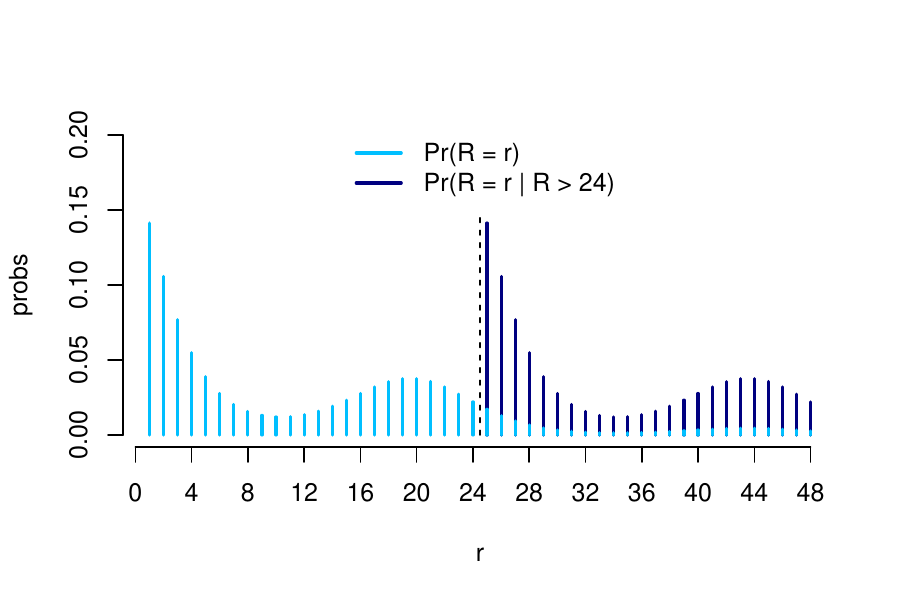}
    \caption{Visualization of the periodic memorylessness property of the overall dwell-time distribution. When standardizing $d_i(r)$ by the factor $\bigl(1-F_i(24)\bigr)^{-1}$ for $r = 25, ..., 48$, we again obtain $d_i(r), \; r = 1, \dotsc, 24$.}
    \label{fig:memorylessness}
\end{figure}

\begin{proof}[Proof of (1)]
Without loss of generality, we set $t=1$ and fix $s \in \mathbb{N}$. Then:
\begin{align*}
    \Pr(R_i^{(1)}>L+s \mid R_i^{(1)}>L) &= \frac{\Pr(R_i^{(1)}>L+s, R_i^{(1)}>L)}{\Pr(R_i^{(1)}>L)} \\
    &= \frac{\Pr(R_i^{(1)}>L+s)}{\Pr(R_i^{(1)} > L)}\\
    &= \frac{\Pr(R_i^{(1)}>L+s)}{1-\Pr(R_i^{(1)}\leq L)}\\
    &= \frac{\sum_{r=1}^\infty \bigl( 1-\gamma_{ii}^{(L+s+r)}\bigr) \prod_{j=1}^{L+s+r-1} \gamma_{ii}^{(j)}}{1-\sum_{r=1}^L \bigl( 1-\gamma_{ii}^{(r)}\bigr) \prod_{j=1}^{r-1} \gamma_{ii}^{(j)}}\\
    &= \frac{\bigl( \prod_{k=1}^L \gamma_{ii}^{(k)} \bigr) \sum_{r=1}^\infty \bigl( 1-\gamma_{ii}^{(L+s+r)}\bigr) \prod_{j=L+1}^{s+r-1} \gamma_{ii}^{(j)}}{\prod_{k=1}^L \gamma_{ii}^{(k)}}\\
    %&= \frac{\bigl( \prod_{k=1}^L \gamma_{ii}^{(k)} \bigr) \sum_{r=1}^\infty \bigl( 1-\gamma_{ii}^{(s+r)}\bigr) \prod_{j=1}^{s+r-1} \gamma_{ii}^{(j)}}{\prod_{k=1}^L \gamma_{ii}^{(k)}}\\
    &= \sum_{r=1}^\infty \bigl( 1-\gamma_{ii}^{(s+r)}\bigr) \prod_{j=1}^{s+r-1} \gamma_{ii}^{(j)}\\
    &= \Pr(R_i^{(1)} > s).
\end{align*}
In the fifth step, we see that in the numerator each summand contains one \textit{full-length product} $\prod_{k=1}^L \gamma_{ii}^{(k)}$ which we factor out, and apply Lemma \ref{lem: sum_prod} in the denominator. In the sixth step, we shift indices as $\gamma_{ii}^{(L+s)} = \gamma_{ii}^{(s)}$ (see Equation (1)) and simplify the fraction.
%hus we get
% \begin{align*}
%     \frac{\bigl( \prod_{k=1}^L \gamma_{ii}^{(k)} \bigr) \sum_{r=1}^\infty \bigl( 1-\gamma_{ii}^{(s+r)}\bigr) \prod_{j=1}^{s+r-1} \gamma_{ii}^{(j)}}{\prod_{k=1}^L \gamma_{ii}^{(k)}} = \sum_{r=1}^\infty \bigl( 1-\gamma_{ii}^{(s+r)}\bigr) \prod_{j=1}^{s+r-1} \gamma_{ii}^{(j)} = \Pr(R_i^{(1)} > s).
% \end{align*}
\end{proof}

\begin{proof}[Proof of (2)]
Again, we fix $s \in \mathbb{N}$, then:
\begin{align*}
    \Pr(R_i>L+s \mid R_i >L) &= \frac{\Pr(R_i>L+s)}{\Pr(R_i>L)} \\
    &= \frac{\Pr(R_i>L+s, R_i>L)}{\Pr(R_i>L)} \\
    &= \frac{\sum_{r=1}^\infty \sum_{t=1}^L w_i^{(t)} \bigl( 1-\gamma_{ii}^{(t+L+s+r-1)}\bigr) \prod_{j=1}^{L+s+r-1} \gamma_{ii}^{(t+j-1)}}{\sum_{r=1}^\infty \sum_{t=1}^L w_i^{(t)} \bigl( 1-\gamma_{ii}^{(t+L+r-1)}\bigr) \prod_{j=1}^{L+r-1} \gamma_{ii}^{(t+j-1)}} \\
    &= \frac{\sum_{t=1}^L w_i^{(t)} \sum_{r=1}^\infty \bigl( 1-\gamma_{ii}^{(t+L+s+r-1)}\bigr) \prod_{j=1}^{L+s+r-1} \gamma_{ii}^{(t+j-1)}}{\sum_{t=1}^L w_i^{(t)} \sum_{r=1}^\infty \bigl( 1-\gamma_{ii}^{(t+L+r-1)}\bigr) \prod_{j=1}^{L+r-1} \gamma_{ii}^{(t+j-1)}}.
\end{align*}
    We again realize that each summand in the numerator contains one \textit{full-length product} $\prod_{k=1}^L \gamma_{ii}^{(k)}$ which we factor out. In the denominator, we can rely on the Proof of (1) to rewrite each inner sum as
    \begin{align*}
        \sum_{r=1}^\infty \bigl( 1-\gamma_{ii}^{(t+L+r-1)}\bigr) \prod_{j=1}^{L+r-1} \gamma_{ii}^{(t+j-1)} = \Pr(R_i^{(t)}>L)
        = 1-\Pr(R_i^{(t)} \leq L)
        %&= 1-\sum_{r=1}^L \bigl( 1-\gamma_{ii}^{(t+r-1)}\bigr) \prod_{j=1} \gamma_{ii}^{(t+j-1)}
        = \prod_{k=1}^L \gamma_{ii}^{(k)}.
    \end{align*}
    As $\gamma_{ii}^{(t+L+s+r-1)} = \gamma_{ii}^{(t+r+s-1)}$ by Equation (1) we again shift indices. Thus:
    \begin{align*}
        \Pr(R_i>L+s \mid R_i >L) &= \frac{\bigl( \prod_{k=1}^L \gamma_{ii}^{(k)} \bigr) \sum_{t=1}^L w_i^{(t)} \sum_{r=1}^\infty \bigl( 1-\gamma_{ii}^{(t+s+r-1)}\bigr) \prod_{j=1}^{r+s-1} \gamma_{ii}^{(t+j-1)}}{\bigl( \prod_{k=1}^L \gamma_{ii}^{(k)} \bigr) \sum_{t=1}^L w_i^{(t)}} \\
        &= \sum_{r=1}^\infty \sum_{t=1}^L w_i^{(t)} \bigl( 1-\gamma_{ii}^{(t+s+r-1)}\bigr) \prod_{j=1}^{r+s-1} \gamma_{ii}^{(t+j-1)} \\
        &= \Pr(R_i > s).
    \end{align*}
    In the second step, we simplified the fraction and realized that the sum of the weights equals one.
\end{proof}

\newpage

\subsection{Additional figures}
\label{A: figures}

\renewcommand{\thefigure}{10}
\begin{figure}[!htb]
    \centering
    \includegraphics[width=16cm]{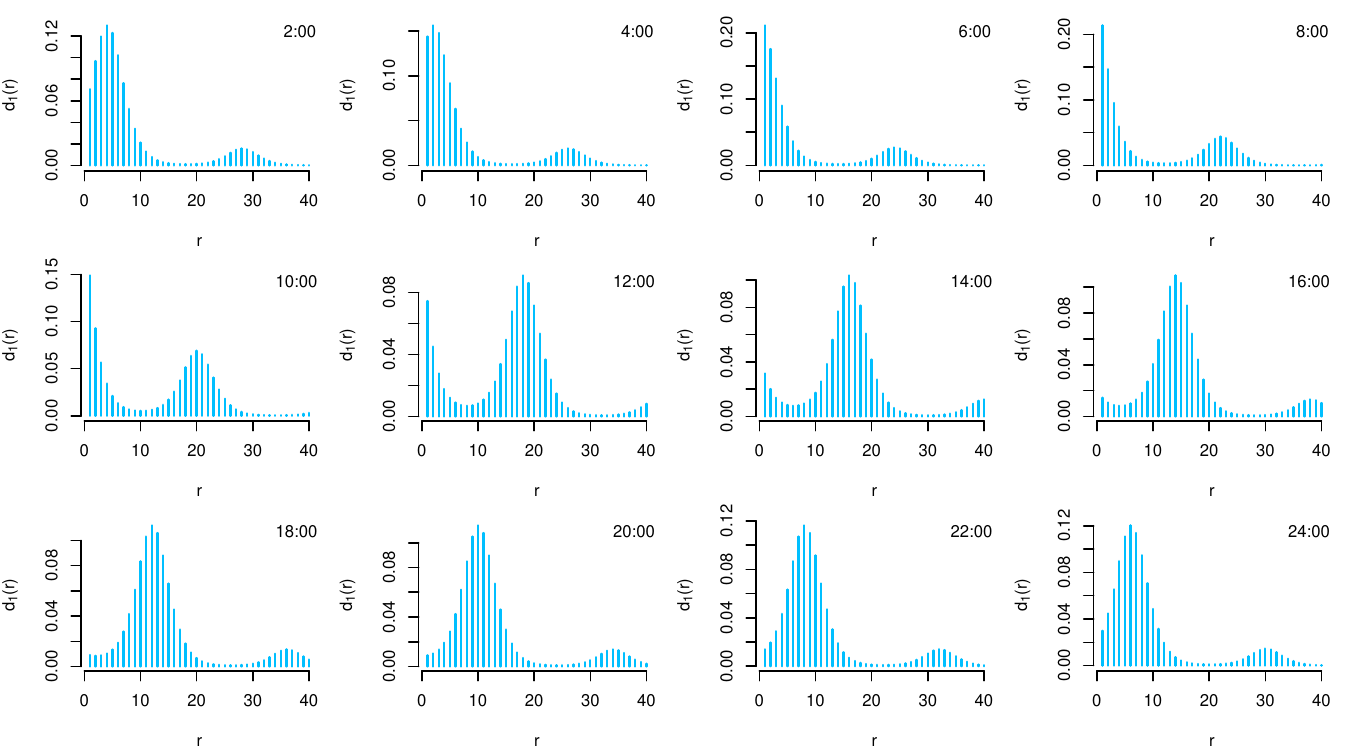}
    \caption{Time-varying dwell-time distribution (state 2) of an example 2-state HMM with trigonometric modeling of periodic variation (see Appendix \ref{A: parameters} for the precise model configuration).}
    \label{fig: timevarying}
\end{figure}

\renewcommand{\thefigure}{11}
\begin{figure}
    \centering
    \includegraphics[width=14cm]{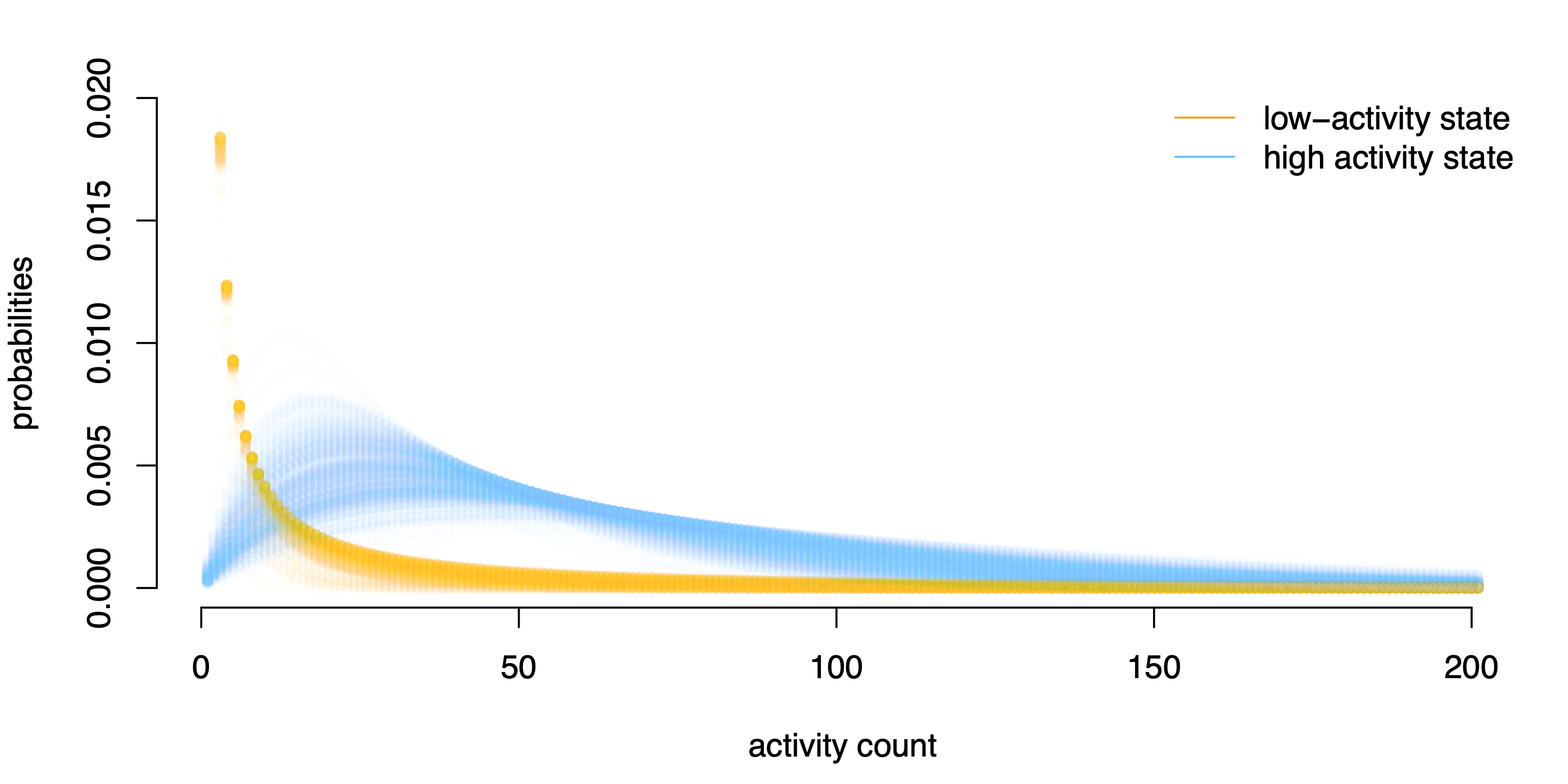}
    \caption{Weighted state-dependent distributions. 200 draws from the distribution of the random mean.}
    \label{fig: state_dep}
\end{figure}

\renewcommand{\thefigure}{12}
\begin{figure}
    \centering
    \includegraphics[width=14cm]{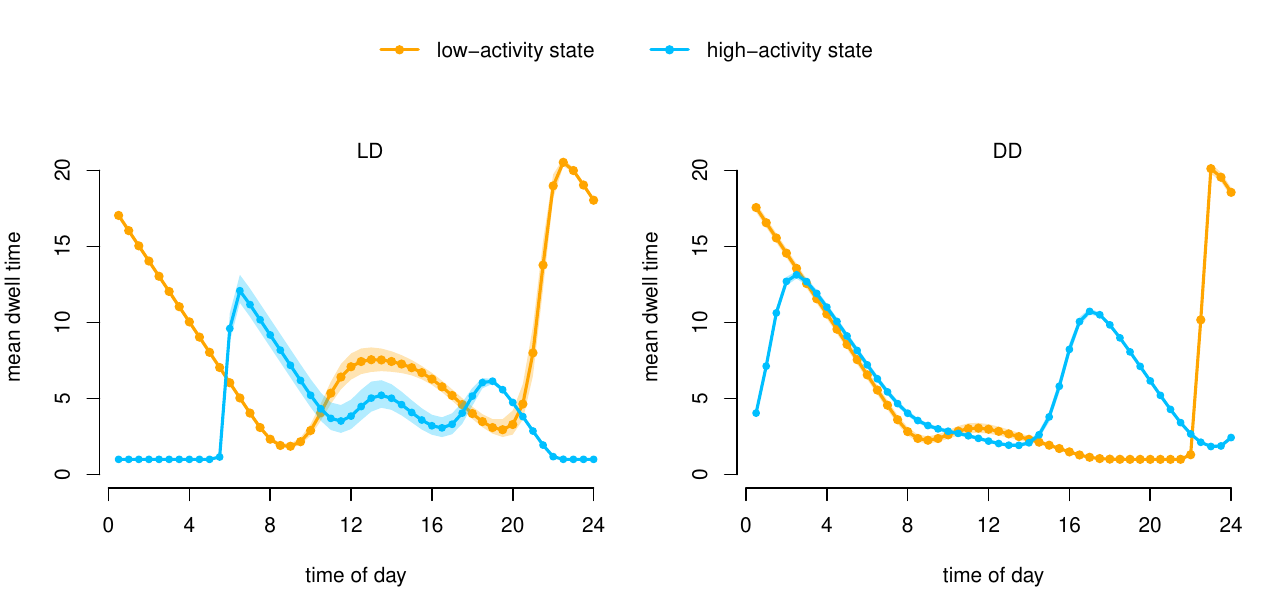}
    \caption{Mean dwell times as a function of the time of day, for LD and DD condition. Pointwise confidence intervals were obtained via Monte Carlo simulation from the approximate normal distribution of the maximum likelihood estimator.}
    \label{fig: mean dwell times}
\end{figure}

\renewcommand{\thefigure}{13}
\begin{figure}
    \centering
    \includegraphics[width=14cm]{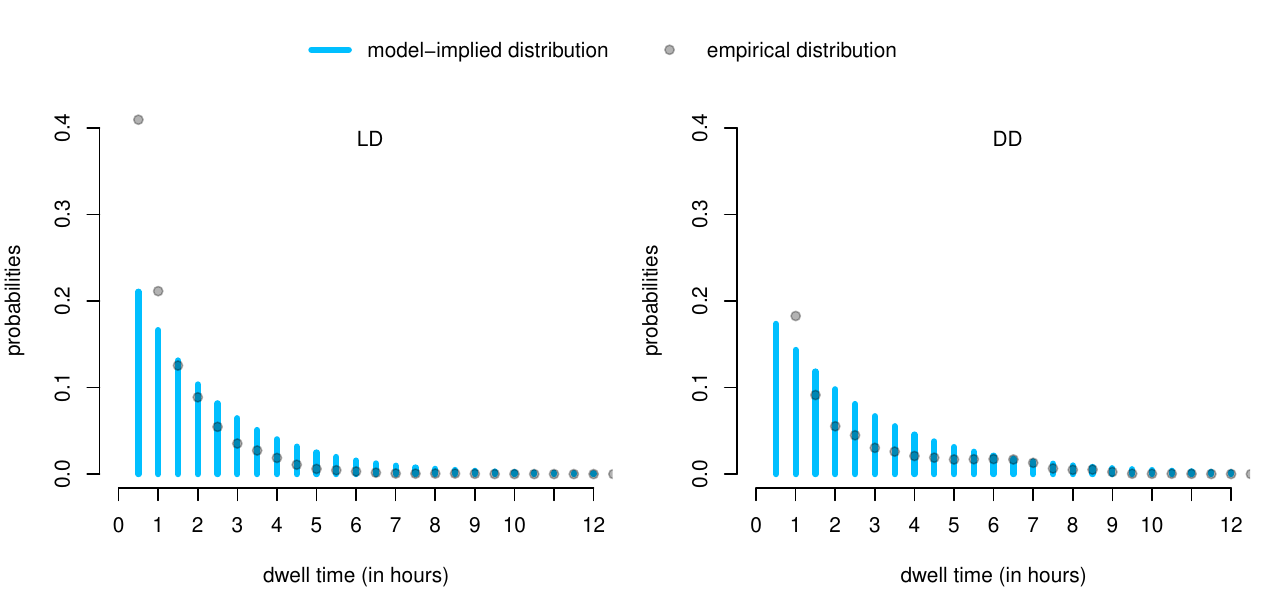}
    \caption{Overall dwell-time distribution of the high-activity state for LD and DD condition, analytically (blue bars) and empirically (gray dots) derived from the fitted homogeneous HMM.}
    \label{fig: overall dwell times hom}
    \vspace{5in}
\end{figure}

\newpage

\subsection{Parameters used for Figures 2, 3, 4, 9, and 10}
\label{A: parameters}

Figures 2, 3, 4, 9 and 10 were 
generated based on 2-state HMMs with the entries of the t.p.m.\ modeled as specified in Equation (2). The parameter values used for Figure 2 are
$$
\bm{\beta}^{(12)} = (-2, -1, -1)\quad \text{and} \quad
\bm{\beta}^{(21)} = (-2, 2, 2),
$$
for scenario 1,
$$
\bm{\beta}^{(12)} = (-5, -1, -1)\quad \text{and} \quad
\bm{\beta}^{(21)} = (-5, 1, 1),
$$
for scenario 2, and
$$
\bm{\beta}^{(12)} = (-3, -0.5, -1, 1, -2)\quad \text{and} \quad
\bm{\beta}^{(21)} = (-3, -0.5, 2, 0.5, -0.5),
$$
for scenario 3.

The parameter values for Figure 3 are
$$
\bm{\beta}^{(12)} = (-1.2, 0.85, 0.15)\quad \text{and} \quad
\bm{\beta}^{(21)} = (-1.5, -0.7, -1.3),
$$
those for Figures 4, 9 and 10 are
$$
\bm{\beta}^{(12)} = (-3,1.5,-0.9) \quad \text{and} \quad
\bm{\beta}^{(21)} = (-3,1.2,-1.1).
$$

\end{document}